\documentclass[a4paper,twocolumn,11pt,accepted=2024-10-19]{quantumarticle}
\pdfoutput=1
\usepackage[utf8]{inputenc}
\usepackage[english]{babel}
\usepackage[T1]{fontenc}
\usepackage{amsmath}
\usepackage{hyperref}
\usepackage{lipsum}
 %\lambdabar was already defined in revtex 4-2, so close it before it is opened again in newpxmath
\usepackage{txfonts}
\usepackage[normalem]{ulem}

\usepackage{amsthm,amssymb,amsxtra}
\usepackage{mathtools}

\usepackage[dvipsnames]{xcolor}

\usepackage{tikz}
\usetikzlibrary{quantikz}
\usepackage{braket}
\usepackage{kantlipsum}

\usepackage[export]{adjustbox}

\usepackage{float}
\makeatletter
\let\newfloat\newfloat@ltx
\makeatother
%above 4 lines ensure algorithm and float works fine with revtek4-2
\usepackage{algorithm}
\usepackage{algpseudocode}

\usepackage{color}
\usepackage{graphicx}
\usepackage{comment}
\usepackage{enumitem}

\usepackage{url}

%packages Daniel
\usepackage{fancyref}
\usepackage{hyperref}

\DeclareMathOperator*{\argmax}{arg\,max}

\newtheorem{theorem}{Theorem}[section]
\newtheorem{lemma}[theorem]{Lemma}

\newtheorem{corollary}[theorem]{Corollary}

\newcommand{\lrb}[1]{\left ( #1 \right )}

\renewcommand{\exp}[1]{\operatorname{exp} \lrb{#1}}

\renewcommand{\log}[1]{\operatorname{log} \lrb{#1}}

\algrenewcommand\algorithmicrequire{\textbf{Input:}}
\algrenewcommand\algorithmicensure{\textbf{Output:}}

\begin{document}

\title{On proving the robustness of algorithms for early fault-tolerant quantum computers}

\author{Rutuja Kshirsagar}
\affiliation{Zapata Computing Inc., 100 Federal Street, Boston, MA 02110, USA}
\thanks{The author has moved to Fujitsu Research of America, Inc.}
\orcid{0000-0001-6210-0259}
\email{rkshirsagar@fujitsu.com}
\author{Amara Katabarwa}
\affiliation{Zapata Computing Inc., 100 Federal Street, Boston, MA 02110, USA}
\orcid{0000-0002-0078-5202}
\email{amara@zapata.ai}
\author{Peter D.\ Johnson}
\affiliation{Zapata Computing Inc., 100 Federal Street, Boston, MA 02110, USA}
\email{peter@zapatacomputing.com}

\maketitle

\begin{abstract}
  The hope of the quantum computing field is that quantum architectures are able to scale up and realize fault-tolerant quantum computing.
Due to engineering challenges, such ``cheap'' error correction may be decades away.
In the meantime, we anticipate an era of ``costly'' error correction, or \emph{early fault-tolerant quantum computing}.
Costly error correction might warrant settling for error-prone quantum computations.
This motivates the development of quantum algorithms which are robust to some degree of error
as well as methods to analyze their performance in the presence of error.
Several such algorithms have recently been developed; what is missing is a methodology to analyze their robustness. To this end,
we introduce a randomized algorithm for the task of phase estimation and give an analysis of its performance under two simple noise models.
In both cases the analysis leads to a noise threshold, below which arbitrarily high accuracy can be achieved by increasing the number of samples used in the algorithm.
As an application of this general analysis, we compute the maximum ratio of the largest circuit depth and the dephasing scale such that performance guarantees hold.
We calculate that the randomized algorithm can succeed with arbitrarily high probability as long as the required circuit depth is less than $0.916$ times the dephasing scale. 
\end{abstract}

Note: Equation (4) and output of Algorithm1 have been updated to accommodate for a missed negative sign. This does not affect any analysis presented in this paper.

\section{Introduction}
\label{sec:intro}

For several decades researchers have been in pursuit of quantum algorithms that would solve important problems in areas such as chemical and materials simulation~\cite{aspuru2005simulated}, optimization~\cite{durr1996quantum}, and cryptography~\cite{shor1994algorithms}.
It has been assumed that the machine needed to solve such problems is a large-scale fault-tolerant quantum computer~\cite{jones2012layered}.
Recent work has sought to develop methods for solving these problems using smaller, error-prone quantum devices~\cite{peruzzo2014variational, farhi2014quantum, anschuetz2019variational}.
Unfortunately, for many application areas, there is mounting evidence that some degree of quantum error correction will be needed to achieve quantum advantage in those areas~\cite{gonthier2020identifying, johnson2022reducing}.

Quantum error correction is used to reduce error rates of quantum operations at the cost of overhead in number of physical qubits and operating time~\cite{devitt2013quantum}.
In the long-term the quantum computing field hopes to realize large-scale fault-tolerant quantum architectures that are able to implement (nearly) error-free, large quantum computations with relatively little quantum error correction overhead~\cite{fowler2012surface, bombin2021interleaving}.
It is possible, however, that for the early quantum computers that might be used to achieve quantum advantage, error reduction with quantum error correction will require large overhead~\cite{fellous2021limitations}
due to the challenges of scaling up system sizes.
For instance, although the engineering achievements are impressive, recent advances in experimental demonstrations of quantum error correction~\cite{acharya2022suppressing}
realized a only 4\% reduction in logical state initialization error rates as the overhead is increased from a distance-3 to a distance-5 surface code.
This is in contrast to predictions~\cite{kim2022fault} of realizing a 90\% to 96\% reduction in error rate for the same increase in surface code distance in the future. 

In the era of early fault-tolerant quantum computing (EFTQC), when error reduction is costly, it will be important to use error correction resources prudently.
If a quantum algorithm is able to tolerate a degree of error, then this tolerance should be exploited to reduce the error correction resources consumed in the computation.
Quantum advantage will be realized when the quantum resources needed to solve useful computational problems can be supplied by quantum hardware.
Over time there will be an increase in the quantum resources that can be supplied.
But, considering the desire to realize quantum advantage as soon as possible, it is pertinent to determine the minimal quantum resources needed to run a given quantum algorithm.
Considering that these quantum resources are used to reduce error, \emph{what is the minimal error reduction required to ensure that a given algorithm will succeed?}
To answer this question, we must investigate how error in quantum computations impacts the performance of quantum algorithms.

Examples of previous work on analyzing the robustness of quantum algorithms
have been empirical~\cite{devitt2004robustness}, have focused on asymptotic scaling statements~\cite{regev2008impossibility},
or have considered error that is disconnected from a physical model~\cite{gilyen2018quantum}.
More recently, there has been 
increasing focus~\cite{wang2021minimizing, tanaka2021amplitude, giurgica2022low} on the development of quantum algorithms that have a built-in robustness.
Through such work, empirical performance models have been developed~\cite{johnson2022reducing} and tested~\cite{katabarwa2021reducing}.
However, for the purposes of large-scale resource estimates, which help to predict the onset of quantum advantage with early fault-tolerant quantum computers, there is a need for provable robustness results which are more closely connected with realistic error models. 

In this work, we develop an algorithm for the archetypal task of \emph{phase estimation} (see Section \ref{sec:alg_intro}). 
The motivating feature of our algorithm is the fact that it facilitates a simple analysis of its robustness under various algorithmic noise models (see Section \ref{sec:robustness}). While a robustness analysis can be carried out for other quantum phase estimation algorithms, a goal of this work is to develop a foundation for the robustness analysis of algorithms beyond quantum phase estimation.
The key insight allowing for this
is that important quantum algorithms can be recast in terms of classical signal processing.
Tools from classical signal processing have facilitated the development of
recent quantum algorithms for early fault-tolerant quantum computing~\cite{lin2022heisenberg, zhang2022computing, wang2022quantum}. 
In these algorithms and ours, the quantum computer is used to generate a stochastic estimate of a time signal.
The data from this estimated time signal is then processed with classical computing to return the output of the computation\footnote{As a technical note, such methods vary in the \emph{kernel function} they use to process the time signal estimate. Lin and Tong~\cite{lin2022heisenberg} use a Heaviside function, Wang et al.~\cite{wang2022quantum} use a Gaussian derivative function, and this work uses a Dirichlet kernel function (see Eq. \ref{eq:noiseless_expectation}).}.
Casting this procedure as signal processing, we can answer questions about the performance of this signal processing when the input signal is subject to noise.
This signal processing perspective serves as a foundation for analyzing the robustness of quantum algorithms more generally and a gateway to exploring application to algorithms such as amplitude estimation~\cite{wang2021minimizing}, ground state energy estimation~\cite{dong2022ground}, ground state property estimation~\cite{zhang2022computing}, linear systems solvers~\cite{childs2017quantum}, and factoring~\cite{shor1994algorithms}.

As with several recently developed quantum algorithms~\cite{lin2022heisenberg, huang2020predicting}, the one developed in this work is randomized.
Such algorithms generate random variables, which, in expectation are the desired quantity. 
In the case of our algorithm, the desired quantity is the discrete Fourier transform of the function $g(k)=e^{ik\theta}$ and the randomization is over the ``time variable'' $k$.
As such, we refer to the algorithm as \emph{randomized Fourier estimation (RFE)}.
The algorithm locates the peak of the estimated Fourier spectrum to yield an estimate of $\theta$.
The fact that the algorithm is randomized is what facilitates a simple analysis of its performance in the presence of device error.

\begin{table*}[t]{\footnotesize{
\centering

{\begin{tabular}{|l|c|c}
\cline{1-2}
 &  Expected Runtime  \\ \cline{1-2}
  RFE (Noiseless) Thm. \ref{thm:alg_success} 
  &  $\frac{81\pi^3}{2\epsilon}\ln\left(\frac{8\pi}{\delta\epsilon}\right)$ 

  \\ \cline{1-2}
 RFE (Adversarial Noise) Thm. \ref{thm:BAN_alg_success} 
 & $\frac{81\pi^3}{2\epsilon}\ln\left(\frac{8\pi}{\delta\epsilon}\right){\left(1 - \frac{9\pi}{2\sqrt{2}}\bar{\eta}\right)^{-2}}$  \\ \cline{1-2} 
 RFE (Random Noise) Cor. \ref{thm:GN_alg_success} 
 &  $\frac{81\pi^3}{2\epsilon}\ln\left(\frac{16\pi}{\delta\epsilon}\right) \left( 1-\frac{9\sigma}{8}\sqrt{\frac{\epsilon\pi}{\ln\left(\frac{16\pi}{\delta\epsilon}\right)}}\right)^{-2}$  \\ \cline{1-2}

RPE \cite{PhysRevA.92.062315} & 
$\frac{2\pi}{\epsilon}\log{\frac{1}{\sqrt{\pi}\delta}}\frac{\log{\frac{1}{2}\left(1 - \sqrt{2}\bar{\eta}\right)^{2}}}{\log{1 - \frac{1}{2}\left(1 - \sqrt{2}\bar{\eta}\right)^2}}$

 \\ \cline{1-2} 
\end{tabular}}
\caption{Comparison of the performance of the randomized Fourier estimation (RFE) algorithm in the noiseless setting and under the two noise settings analyzed in this work. In each case, the expected circuit depth (measured in number of $c$-$U$ operations) is $\frac{\pi}{\epsilon}$. The primary feature of the expected runtime in the noisy settings is an additional factor that is greater than 1 and goes to infinity as the noise parameters $\bar{\eta}$ and $\sigma$ reach a certain threshold. For fixed target accuracy $\epsilon$ and success probability $\delta$, as long as the noise parameters are less than a certain value, the algorithm can be guaranteed to succeed with $1-\delta$ probability by taking this factor increase in samples. These algorithms and settings are also compared to the runtime of robust phase estimation (RPE) algorithm of \cite{PhysRevA.92.062315}. For the sake of comparison, we have derived a runtime upper bound for this algorithm in terms of its accuracy and failure rate, and using the fact that $\bar{\eta}$, as defined in Sec. \ref{subsec:ban}, is twice $\delta'$, as defined in Eq. V. 20 of \cite{PhysRevA.92.062315}. While RPE is more performant than RFE for the task of phase estimation, the methodology we use to analyze RFE is generalizable to a broader class of algorithms for early fault-tolerant quantum computers.} 

\label{tab:comparison}}}
\end{table*}

In order to analyze robustness we must model noise.
It is very challenging to accurately model the impact of device error on quantum algorithms.
Accordingly, we will take an intermediate approach.
The effect of device error on a quantum algorithm's performance is mediated through the way that device error affects quantum measurement outcome probabilities.
Rather than propose a model for, say, gate-level quantum circuit errors, we will develop simple, general models for the deviation of outcome probabilities.
We will refer to these as \emph{algorithm error models}. Similar considerations affecting the outcome probabilities have been made in \cite{PhysRevA.92.062315}.
The bounds on the error tolerated by our algorithm are weaker as compared to the bounds on the additive error tolerated by RPE [Theorem 1, \cite{PhysRevA.92.062315}]. However, our algorithm can be extended to developing other robust algorithms for amplitude estimation, ground state energy estimation, etc. That is, by rewriting the input ansatz as a linear combination of eigenstates, the Fourier spectrum generated by our algorithm corresponds to a time signal with multiple peaks which can be detected by RFE by choosing the maximum phase in specified intervals. We leave this analysis for future work. We anticipate that the error will be larger when the initial state is not an exact eigenvector. The error tolerance in such a case is currently being analyzed.

The remaining paper is structured as follows.
In Section \ref{sec:alg_intro} we give a gentle introduction to the phase estimation algorithm and its performance in the noiseless setting.
In Section \ref{sec:robustness} we develop two different algorithm error models and analyze the performance of randomized Fourier estimation under these conditions.
In Section \ref{sec:discussion} we discuss the implications of our analysis and provide an outlook to future work.
For ease of reading, we have reserved the lemmas and proofs of the main theorems to the appendix.

\section{Randomized Fourier estimation}
\label{sec:alg_intro}

The central task we consider is the estimation of the phase angle $\theta$, defined from $$U\ket{\psi}=e^{i\theta}\ket{\psi},$$
where $\ket{\psi}$ is an eigenstate of $U$.
We assume that at our disposal is a preparation of the state $\ket{\psi}$ and the ability to apply the controlled unitary $c-U := \ket{0}\bra{0}\otimes\mathbb{I}+\ket{1}\bra{1}\otimes U$.
Phase estimation is a fundamental subroutine of several quantum algorithms\footnote{An important note is that, in their basic forms, these algorithms require a coherent implementation of quantum phase estimation, whereas this work and others consider incoherent versions of the algorithm.} including Shor’s factoring algorithm \cite{shor1994algorithms}, Harrow, Hassidim, and Lloyd's linear systems algorithm \cite{harrow2009quantum}, and the quantum counting algorithm \cite{brassard1998quantum}.
Furthermore, the robust estimation of quantum phases is closely related to the task of parameter estimation in the context of quantum metrology \cite{giovannetti2004quantum} and Hamiltonian learning \cite{granade2012robust, ferrie2012adaptive}.
Many techniques have been developed in these contexts \cite{svore2013faster, wiebe2016efficient, o2019quantum}
and some have been implemented in experiments \cite{paesani2017experimental, lumino2018experimental}.
Continued research at the interface of quantum computing and quantum metrology will be valuable in pushing the capabilities of quantum computers.

Our goal, slightly different than these above works, is to develop a framework for analyzing the robustness of quantum algorithms for early fault-tolerant quantum computing more broadly.
Towards this, we develop a simple algorithm for the task of phase estimation that shares structural similarity with several other algorithms designed for early fault-tolerant quantum computing including those for ground state energy estimation
\cite{lin2022heisenberg, wang2022quantum} and ground state property estimation (GSPE) \cite{zhang2022computing}. This framework applies to algorithms which use collective data obtained from single-qubit measurement outcomes.
Our hope then is that this framework and methods can be helpful in analyzing the robustness of such quantum algorithms.

Before introducing the algorithm, we note that, after this manuscript originally appeared, several related EFTQC algorithms have been developed and their robustness analyzed with alternative approaches.
Ni et al. \cite{ni2023low} developed a robust quantum algorithm for phase estimation that can succeed even when the input state is imperfect.
Li et al. \cite{li2023adaptive} developed a robust quantum algorithm for multiparameter phase estimation.
A variant of the algorithm introduced below was analyzed in \cite{liang2023modeling}, proving it to be robust to ``exponential decay error'' and showing that the algorithm can reduce quantum resources compared to the traditional quantum phase estimation algorithm.
Most recently, an algorithm for ground state energy estimation \cite{ding2023robust} was developed based on the QCELS method \cite{dong2022ground} and also shown to be robust to exponential decay error.

With this context established, we introduce the robust Fourier estimation algorithm. 
The algorithm generates and processes single-qubit measurement data obtained from Hadamard tests as shown in Figure \ref{fig:hadamard_test}.
\begin{figure}[h!]
\centering
\begin{quantikz}
\lstick{$\ket{0}$} & \gate[wires=1][1cm]{H} & \gate[wires=1][1cm]{S^\ell} & \ctrl{1} & \gate[wires=1][1cm]{H^\dagger} & \meter{} & \\
\lstick{$\ket{\psi}$} & \qwbundle{} & \qw & \gate{U^k} & \qw & \qw &
\end{quantikz}

\caption{Circuit diagram of the Hadamard, where the application of phase gate $S=\begin{bmatrix}
1 & 0\\
0 & i
\end{bmatrix}$ is toggled by $\ell=0,1$.}
\label{fig:hadamard_test}
\end{figure}
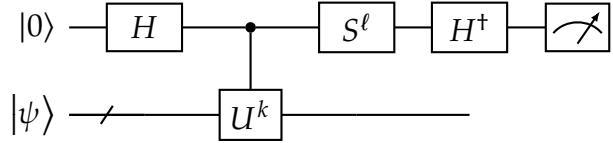

In the case that $\ell=0$, for Hadamard test of $U^k$ the quantum computer gives us $c=\pm1$ outcomes with probability
\begin{align}
    \textup{Pr}(c|k) = \frac{1+c\cos(k\theta)}{2},
\end{align}
where $\textup{Re}(\langle\psi|U|\psi\rangle) = \cos(\theta)=\Pi$.
We call this procedure a \emph{real Hadamard test}.
If a phase gate $S$ is placed before the ancilla measurement (that is $\ell = 1$), the probability of $s=\pm1$ is
\begin{align}
    \textup{Pr}(s|k) = \frac{1+s\sin(k\theta)}{2}.
\end{align}
We call this procedure an \emph{imaginary Hadamard test}.

A simple strategy for estimating $\cos\theta$ (and therefore $\theta$) is to set $k=1$, make a number of Hadamard test measurements with $\ell=0$, and compute the sample mean of $c$.
This approach is often called the \emph{prepare-and-measure} strategy.
Using Hadamard tests with $k\geq 1$, we can gain more information about $\theta$ per measurement compared to the $k=1$ case.
The reason why data from these Hadamard tests gives more information about $\theta$ is that, for large $k$, the above likelihoods are highly sensitive to the value of $\theta$.

We introduce the randomized Fourier estimation algorithm by drawing an analogy between the current setting and a signal processing task.
Consider $k$ to be a "time variable". The biases in the probabilities of the real and imaginary Hadamard tests then encode oscillating time signals,

\begin{align}
\label{eq:sine_cosine_relation_to_probabilities}
   \cos{k\theta} &=\textup{Pr}(c=1|k)-\textup{Pr}(c=-1|k)\\ 
   \sin{k\theta} &=\textup{Pr}(s=-1|k)-\textup{Pr}(s=1|k).
\end{align}
These are the expected values of $c$ and $s$.
If we consider these values to be the real and imaginary parts of a complex time signal $g(k)$, we have
\begin{align}
    g(k) = \cos(k\theta) + i\sin(k\theta) = e^{ik\theta}.
\end{align}
For fixed $k$, if we take a real Hadamard test sample $c$ and an imaginary Hadamard test sample $s$, we can form a random variable whose expected value is the time signal at times $k$,
\begin{align}
    \mathbb{E}(c+is) = e^{ik\theta} = g(k).
\end{align}

Choosing some large maximum time $K$, if we were to collect enough samples to get an accurate estimate of the time signal for times $k=0, \ldots, K-1$, then we might expect to be able to recover the frequency, $\theta$, from this time signal estimate.

One way to recover the frequency is to calculate the discrete Fourier transform (DFT) of the estimated time signal.
In the infinite-sample limit, when the time signal estimate is $g(k)=e^{ik\theta}$, the DFT will peak around a value corresponding to $\theta$ (due to the finite time window, there will be ``leakage'' effects which create additional smaller false peaks (see blue curves in the Fourier spectrum plot of Figure \ref{fig:RFE_example}).
When using a finite, but sufficiently large, number of samples, we expect the DFT of the estimated time signal to still peak around the true peak.
The time signal has $K$ different times. So, getting a constant-accuracy estimate of the time signal requires $\Omega(K)$ samples.
An average sample takes time $\Omega(K)$, leading to a total runtime of $\Omega(K^2)$.
Fortunately, we do not need a very accurate estimate of the time signal in order to get an accurate estimate of the location of the true peak in the Fourier domain.
With just $O(\log{K})$ samples, the peak can be estimated to within $O(\frac{1}{K})$ accuracy with high probability.
The total (expected) runtime is then $O(K\log{K})$.

\begin{figure}[h!]
    \centering
    \includegraphics[width=0.5\textwidth]{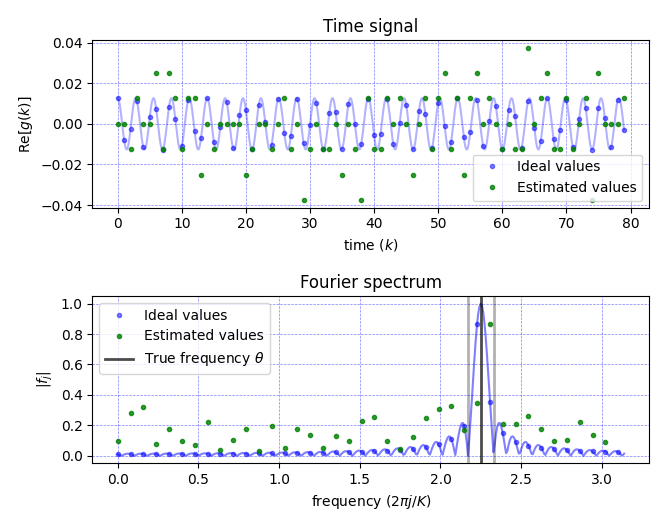}
    \caption{This figure depicts a successful instance of the randomized Fourier estimation algorithm. Despite the fact that the estimate of the time signal (green dots in upper panel) is rather poor, the estimate of the Fourier spectrum (green dots in lower panel) exhibits a spike within $\epsilon=0.08$ (the grey vertical lines) of the true frequency of $\theta=2.25$ (the black vertical line). 
    Here we have chosen to take $M=80$ samples, which is 40 times fewer than the number considered in our algorithm upper bound (c.f. Theorem \ref{thm:alg_success}). The blue dots and blue curves denote the ideal values and continuous functions that define them, respectively.}
    \label{fig:RFE_example}
\end{figure}

A rigorous description of this algorithm is given in Algorithm \ref{alg:dft_rae}.
\begin{algorithm}[H]
\caption{Randomized Fourier Estimation}
\label{alg:dft_rae}
\begin{algorithmic}
 \Require $M$, $K$ $\in \mathbb{N}$ \Comment{Number of samples and grid size, respectively.}
\Ensure $\hat{\theta}\in [0,\pi]$ 
\State $\hat{f} \gets \vec{0}\in\mathbb{C}^K$ \Comment{Initialize estimate of Fourier coefficients.} 
\For{$i \in  \{1, \ldots, M\}$}
\State Draw $k_i$ uniformly from the set $\{0,\ldots,K-1\}$.
\State Call \textsc{Hadamard}$[\ket{\psi},U^{k_i}]$ to output $(c_i,s_i)$.
\For{$j \in \{0, \ldots, K-1\}$}
\State $\hat{f}_j \gets \hat{f}_j+(c_i+is_i)e^{{-2\pi ik_ij}/{K}}/{M}$ \Comment{Update estimate of Fourier coefficient.}
\EndFor
\EndFor
\State $\hat{\theta} \gets 2\pi - \frac{2\pi}{K}\argmax_{j} |\hat{f}_j|$ \Comment{Locate the largest-magnitude Fourier coefficient.}
\end{algorithmic}
\end{algorithm}
We give a simple explanation for how Algorithm \ref{alg:dft_rae} achieves this performance.
The core concept of the algorithm is the fact that, with very few samples, the peak of the discrete Fourier transform can be detected.
Each call to the subroutine \textsc{Hadamard}$[\ket{\psi},U^{k_i}]$ returns the outcomes $c$ and $s$ of the real and imaginary Hadamard tests, respectively, as shown in Figure \ref{fig:hadamard_test}.
Surprisingly, from each real-imaginary Hadamard test sample, we can construct an unbiased estimate of the discrete Fourier transform.
This is achieved with a randomization over the choice of $k\in\{0,\ldots,K-1\}$. Drawing $k$ from the uniform distribution over these values, we form the $j$th discrete Fourier coefficient estimate
$\hat{f}_j=(c+is)e^{\frac{-2\pi i jk}{K}}$, where $j\in \{0, \ldots, K-1\}$. As seen in the following calculation, the expected value of this random variable is the discrete Fourier transform of the time signal $g(k)$,
\begin{equation}
\begin{split}
    \mathbb{E}\hat{f}_j&= \mathbb{E}_{k}\mathbb{E}_{c|k}\mathbb{E}_{s|k}(c+is)e^{\frac{-2\pi i jk}{K}}\\
    &= \frac{1}{K}\sum_{k=0}^{K-1} \exp{ik\theta} e^{\frac{-2\pi i jk}{K}}\\
    &=\frac{1}{K}\frac{1-\exp{iK(\theta-\frac{2\pi j}{K})}}{1-\exp{i(\theta-\frac{2\pi j}{K})}}.
\end{split}
\label{eq:noiseless_expectation}
\end{equation}

We define $f_j := \mathbb{E}\hat{f}_j$.
Viewed as a continuous function in $j$, the magnitude of $f_j$ peaks at $j=\frac{K\theta}{2\pi}$. The algorithm works by averaging $M$ independent Fourier coefficient estimates and then locating the index, or ``frequency'', $j$ for which the magnitude of the estimated coefficient is largest.
This frequency estimate $j$ corresponds to the estimate of $\theta$ via $\hat{\theta}=\frac{2\pi j}{K}$.
We will refer to the $j$ as ``frequencies'', as they index the frequency dimension, while $k$ indexes the time dimension.
The granularity of the estimates is governed by $K$; 
the two frequencies closest to $\theta$ will be off by no more than $\frac{2\pi}{K}$.
Choosing $K=\lceil \frac{2\pi}{\epsilon}\rceil$ then ensures that outputting either of these ``adjacent'' frequencies gives $|\hat{\theta}-\theta|\leq \epsilon$. 
The frequency ``closest'' to $\theta$ (or frequencies closest, in the case they are equidistant from $\theta$) will be off by no more than $\frac{\pi}{K}$.
Our analysis will rely on the fact that this close frequency (or these close frequencies) will have a large expected Fourier coefficient, while the non-adjacent frequencies will have small expected Fourier coefficients\footnote{The algorithm succeeds when either close frequency is the output estimate. Yet, we choose to disregard the properties of the frequency that is (or frequencies that are) adjacent, yet not close (i.e. $j$ satisfying $\frac{1}{2}<|j-\frac{K\theta}{2\pi}|\leq1$). Although the performance bound is looser, in our analysis we only count the algorithm as succeeding when the magnitude of the estimate of the close frequency Fourier component estimate is larger than a fixed value \emph{and} the magnitude of the estimate of all non-adjacent frequency Fourier component estimates are smaller than that fixed value.}.

Why will this estimate be close to $\theta$ with high probability?
The analysis uses two facts about the expected Fourier coefficients: 1) $|f_j|$ of a close frequency is relatively large (Corollary \ref{cor:adj_bound} shows it is greater than $\frac{2}{\pi}$) and 2) for all frequencies $j$ that would give estimates of $\theta$ with error greater than $\frac{2\pi}{K}=\epsilon$, ``non-adjacent frequencies'', $|f_j|$ is relatively small (Corollary \ref{cor:non_adj_bound} shows they are less than $\frac{10}{9\pi}$).
This gap of $\frac{8}{9\pi}$ between $|f_j|$ of close and non-adjacent frequencies gives a buffer for the statistical estimates of $f_j$;
even if all of the estimates $\hat{f}_j$ are off by up to $\frac{4}{9\pi}$, then the magnitude of $\hat{f}_j$ for the close frequency will still be larger than those of the non-adjacent frequencies.
In other words, inaccurate estimates (up to $\frac{4}{9\pi}$ off) still lead the algorithm to succeed.

The final step in the analysis is to assess the likelihood of the close frequency being chosen.
Loosely, the Hoeffding bound ensures that a mean over $M$ samples will deviate by $O(1)$ with no more than $O(\exp{-M})$ chance.
Combining this with a union bound gives that the likelihood that any of $K$ such sample means deviates by more than $O(1)$ is no more than $O(K\exp{-M})$.
Therefore, by setting $K=O(\frac{1}{\epsilon})$, $M=O(\log{\frac{1}{\delta\epsilon}})$ samples will suffice to ensure that the algorithm succeeds (i.e. no non-adjacent frequency is chosen) with greater than $1-\delta$ chance. This is formalized in the following Theorem.
\begin{theorem}[Algorithm success]\label{thm:alg_success}
Given that the inputs are  $M\geq\lceil\frac{81\pi^2}{2}\ln\left(\frac{8\pi}{\delta\epsilon}\right)\rceil$ and $K\geq\lceil \frac{2\pi}{\epsilon}\rceil$, Algorithm \ref{alg:dft_rae} gives an $\epsilon$-accurate estimate of $\theta$ with success probability greater than $1-\delta$.
\end{theorem}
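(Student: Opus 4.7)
The plan is to combine the magnitude gap between expected Fourier coefficients at close versus non-adjacent frequencies, established in Corollaries \ref{cor:adj_bound} and \ref{cor:non_adj_bound}, with a Hoeffding concentration argument for the empirical averages $\hat f_j$. I would first partition the indices $j\in\{0,\dots,K-1\}$ into those within distance $\tfrac{1}{2}$ of $K\theta/(2\pi)$ (the ``close'' frequencies, of which there are one or, in an edge case, two), those at distance in $(\tfrac{1}{2},1]$ (``adjacent-but-not-close''), and those at distance greater than $1$ (``non-adjacent''). Both close and adjacent frequencies yield $|\hat\theta-\theta|\leq 2\pi/K\leq \epsilon$ via the choice $K\geq\lceil 2\pi/\epsilon\rceil$, so it suffices to show that with probability at least $1-\delta$ no non-adjacent $j$ realizes $\argmax_j|\hat f_j|$.

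From the two corollaries, at any close frequency $j^\star$ we have $|f_{j^\star}|>\tfrac{2}{\pi}$, while at any non-adjacent $j$ we have $|f_j|<\tfrac{10}{9\pi}$, leaving a gap of $\tfrac{8}{9\pi}$. Hence, if every $\hat f_j$ lies within $\tfrac{4}{9\pi}$ of its expectation $f_j$, the reverse triangle inequality forces $|\hat f_{j^\star}|>\tfrac{14}{9\pi}>|\hat f_j|$ for every non-adjacent $j$, and so the output of Algorithm \ref{alg:dft_rae} must be either close or adjacent-but-not-close, both of which are $\epsilon$-accurate. The remaining task is to bound the probability of the bad event that some $\hat f_j$ deviates from $f_j$ by more than $\tfrac{4}{9\pi}$.

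For the concentration step I would write $\hat f_j$ as the sample mean of $M$ i.i.d.\ complex random variables $X_i^{(j)} := (c_i + is_i)e^{-2\pi i k_i j/K}$, whose real and imaginary parts lie in $[-\sqrt{2},\sqrt{2}]$. Applying the standard (real) Hoeffding inequality separately to the real and imaginary coordinates and combining via a union bound gives a tail bound of the form $\mathbb{P}[|\hat f_j - f_j|\geq t]\leq 4\exp(-Mt^2/8)$. Setting $t=\tfrac{4}{9\pi}$ and union-bounding over all $K$ frequencies yields an overall failure probability at most $4K\exp(-2M/(81\pi^2))$. Demanding this be at most $\delta$, solving for $M$, and substituting $K=\lceil 2\pi/\epsilon\rceil$ reproduces the threshold $M\geq\lceil\tfrac{81\pi^2}{2}\ln(\tfrac{8\pi}{\delta\epsilon})\rceil$ stated in the theorem.

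The main obstacle I expect is bookkeeping the constants through the complex-valued Hoeffding step: the $\sqrt{2}$ bound on $|c_i+is_i|$ and the splitting into real and imaginary parts must be tracked carefully so that one lands on exactly the $\tfrac{81\pi^2}{2}$ prefactor and the $\tfrac{8\pi}{\delta\epsilon}$ logarithm argument rather than values that are slightly off. The structural half of the argument, driven by the clean $\tfrac{8}{9\pi}$ gap between $|f_{j^\star}|$ and $|f_j|$ at non-adjacent $j$, is straightforward once the two corollaries are in hand; the calibration $t=\tfrac{4}{9\pi}$ is chosen precisely to balance the two sides of this gap and to mesh with the $t^2/8$ factor in the Hoeffding exponent.
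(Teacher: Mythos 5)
Your proposal is correct and follows essentially the same route as the paper: the structural half is the paper's Lemma \ref{lem:inspec_suff} (built on Corollaries \ref{cor:adj_bound} and \ref{cor:non_adj_bound} and the $\frac{4}{9\pi}$ in-spec tolerance), and the concentration half is the paper's Lemma \ref{lem:inspec_likelihood} (componentwise Hoeffding plus union bounds), with matching constants throughout. The only detail you gloss over is the small-$K$ edge case (the paper dispenses with $\epsilon\geq\frac{\pi}{2}$ separately so that $K>4$ holds, which Corollary \ref{cor:non_adj_bound} requires), but this does not affect the substance of the argument.
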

\begin{proof}
If $\epsilon \geq \frac{\pi}{2}$, then $\hat{\theta}=\frac{\pi}{2}$ will be a $\frac{\pi}{2}$-accurate estimate for any $\theta$, so we assume $\epsilon < \frac{\pi}{2}$.
Lemma \ref{lem:inspec_suff} ensures that the algorithm succeeds if $K=\lceil \frac{2\pi}{\epsilon}\rceil > 4$ and all estimated coefficients are in spec (i.e. within $\frac{4}{9\pi}$ of their expected value).
Lemma \ref{lem:inspec_likelihood} ensures that the likelihood of all coefficients being in spec is greater than $1-{4K}\textup{exp} \left(-\frac{2M}{81\pi^2} \right)$.
To ensure that this likelihood is greater than $1-\delta$, we set $M=\lceil\frac{81\pi^2}{2}\ln\left(\frac{8\pi}{\delta\epsilon}\right)\rceil$.
\end{proof}
Since Algorithm \ref{alg:dft_rae} chooses $k$ uniformly from $0$ to $K-1$, the expected runtime of a single circuit is $O(K)$. With this observation, the above theorem implies the following corollary.
\begin{corollary}
Algorithm \ref{alg:dft_rae} can be used to give an $\epsilon$-accurate estimate with success probability greater than $1-\delta$ with expected runtime $O(\frac{1}{\epsilon}\log{\frac{1}{\delta\epsilon}})$.
\end{corollary}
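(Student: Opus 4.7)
The plan is to combine Theorem \ref{thm:alg_success} with a straightforward calculation of the expected per-sample runtime. The algorithm makes $M$ calls to \textsc{Hadamard}$[\ket{\psi}, U^{k_i}]$, and the cost of the $i$th call is dominated by implementing the controlled power $c$-$U^{k_i}$, which takes $k_i$ applications of $c$-$U$. The remaining classical bookkeeping (updating $K$ Fourier coefficient estimates per sample) contributes at most a polylogarithmic factor in $K$ relative to the quantum cost, and under the standard convention of counting circuit depth in number of $c$-$U$ operations (as in Table \ref{tab:comparison}), we may simply bound the per-sample cost by $k_i$.

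First I would fix parameters as in Theorem \ref{thm:alg_success}: set $K = \lceil 2\pi/\epsilon \rceil = O(1/\epsilon)$ and $M = \lceil \tfrac{81\pi^2}{2} \ln(8\pi/(\delta\epsilon)) \rceil = O(\log(1/(\delta\epsilon)))$. With these choices, the theorem guarantees an $\epsilon$-accurate estimate with probability at least $1 - \delta$. Next I would compute the expected cost of a single iteration of the outer loop. Since $k_i$ is drawn uniformly from $\{0, 1, \ldots, K-1\}$, linearity of expectation gives
\begin{equation}
\mathbb{E}[k_i] = \frac{K-1}{2} = O\left(\frac{1}{\epsilon}\right).
\end{equation}

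Finally, by linearity of expectation across the $M$ independent iterations, the total expected runtime is
\begin{equation}
M \cdot \mathbb{E}[k_i] = O\left(\log\left(\frac{1}{\delta\epsilon}\right)\right) \cdot O\left(\frac{1}{\epsilon}\right) = O\left(\frac{1}{\epsilon} \log\left(\frac{1}{\delta\epsilon}\right)\right),
\end{equation}
which is the claimed bound. There is no real obstacle here: the corollary is essentially a bookkeeping step that converts the sample complexity bound of Theorem \ref{thm:alg_success} into an expected-runtime bound by averaging the per-sample circuit depth over the uniform distribution on $k$. The only subtlety worth flagging is the choice to measure runtime in number of $c$-$U$ calls rather than, say, worst-case circuit depth; under this convention the factor of $1/2$ in the average of $k_i$ is absorbed into the big-$O$ and the bound follows immediately.
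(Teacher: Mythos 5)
Your proposal is correct and follows the paper's own argument exactly: the paper simply notes that since $k$ is drawn uniformly from $\{0,\ldots,K-1\}$, the expected runtime of a single circuit is $O(K)$, and combines this with the parameter choices $K=\lceil 2\pi/\epsilon\rceil$ and $M=O(\log{\frac{1}{\delta\epsilon}})$ from Theorem \ref{thm:alg_success}. Your version just makes the $\mathbb{E}[k_i]=(K-1)/2$ computation and the linearity-of-expectation step explicit.
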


In the following section we investigate the performance of Algorithm \ref{alg:dft_rae} in the presence of noise. We develop several noise models and give a rigorous analysis of the algorithm's success probability.

\section{Provable Robustness of Randomized Fourier Estimation}
\label{sec:robustness}

Here we analyze our algorithm when subject to error in the quantum computation. 
We will show that Algorithm \ref{alg:dft_rae} is robust to a degree of noise under quite general noise models. 
The models will accommodate any type of noise source, including imperfect preparation of $\ket{\psi}$, 
imperfect quantum gate implementation, 
measurement error,
and error introduced by approximating circuit operations (e.g. compilation error or Hamiltonian simulation approximation error).
Whatever the source of these errors, their impact on the algorithm performance is mediated by the way they alter the Hadamard test probabilities:
\begin{align}
\label{eq:BAN_prob_cos}
    \textup{Pr}(c|k) = \frac{1+c( \cos(k\theta)+\eta_{1,k})}{2}
\end{align}
and 
\begin{align}
\label{eq:BAN_prob_sin}
    \textup{Pr}(s|k) = \frac{1+s(\sin(k\theta)+\eta_{2,k})}{2},
\end{align}
where $\eta_{1,k}, \eta_{2,k}$ represent the deviations from the ideal. 
We refer to a model of these alterations as an \emph{algorithm error model}.
In this setting, returning to the time signal analogy as described in Section \ref{sec:alg_intro}, we can view our signal as being corrupted by some error source
\begin{align}
    g(k)\rightarrow\tilde{g}(k) = e^{ik\theta} + \eta_{1,k}+i\eta_{2,k}.
\end{align}
Much like signal recovery in signal processing, we are interested in understanding the degree to which Algorithm \ref{alg:dft_rae} can reconstruct the frequency $\theta$ from this noisy signal.

First, we clarify our approach to algorithm error modeling. We will assume that these deviations, labeled $\eta$, are drawn from some real distribution $\textup{Pr}(\eta)$, which includes the case that the $\eta$ are pre-determined values chosen by an adversary.
Under this model the expected Fourier coefficients will deviate from their noiseless values
\begin{align}
    \mathbb{E}\hat{f}_j &= \mathbb{E}_{\eta}\mathbb{E}_{k}\mathbb{E}_{c|k}\mathbb{E}_{s|k}\hat{f}_j \nonumber\\
    &= \mathbb{E}_{\eta}\mathbb{E}_{k}(e^{ik\theta}+\eta_{1,k}+i\eta_{2,k})e^{-2\pi i jk/K} \nonumber\\
    &= f_j+\eta_j,
\end{align}
where we define $\eta_j:=\mathbb{E}_{\eta}\mathbb{E}_{k}(\eta_{1,k}+i\eta_{2,k})e^{-2\pi i jk/K}$ and  $\hat{\eta}_j:=\mathbb{E}_{k}(\eta_{1,k}+i\eta_{2,k})e^{-2\pi i jk/K}$ for the contribution to the $j$th Fourier coefficient from the drawn noise $\eta$. Moreover, $f_j$ is the expectation of noiseless Fourier coefficient as computed in equation \eqref{eq:noiseless_expectation}.
We emphasize that the impact of noise as modeled above is quite general in that it accommodates any quantum algorithm that processes the outcomes of binary measurement outcomes. This includes algorithms for ground state energy estimation \cite{lin2022heisenberg, wang2022quantum} and ground state property estimation \cite{zhang2022computing}.

In the following two subsections we develop two different noise models (i.e. choices of $\textup{Pr}(\eta)$) and analyze the performance of Algorithm \ref{alg:dft_rae} when subject to them.

\subsection{Bounded adversarial noise model}
\label{subsec:ban}

We first present a worst case analysis under the condition that the noise is bounded. 
 We give two physical settings in which the BAN model can applied to:
\begin{enumerate}
    \item In the first setting (Appendix \ref{appendix:motivation_BAN_model}) we assume Pauli Errors in the state preparation and model the errors occuring in the unitary as the noisy unitary being 
    $\tilde{U} = \mathcal{E} \circ U$ where $\mathcal{E}$ is the Pauli Channel and $U$ is the noiseless unitary. The analogues of equations \eqref{eq:BAN_prob_cos} and \eqref{eq:BAN_prob_sin} are of course long and unwieldy but the general form can be subsumed with in the BAN model  by providing what the $\eta$ is.
    \item The second is a simpler noise model and easier for analysis and is analyzed in this section. It assumes the more dominant source of noise comes for just dephasing errors. Possible realistic settings in which it might be relevant are discussed. 
\end{enumerate}

This setting allows the adversary to choose any deviation of the Hadamard test probabilities as long as they are within some limit.
We can provide performance guarantees as long as this limit is under the bound. Any results we establish that assume bounded adversarial noise will be quite general because we have made limited assumptions about the nature of the noise.
The bounded adversarial noise model assumes the following: once the parameters of the algorithm have been set (e.g. number of samples $M$, number of discrete points $K$, and the true value of the parameter $\theta$), an adversary is able to set $\eta_{1,k}$ and $\eta_{2,k}$ to any value in the interval $[-\bar{\eta},\bar{\eta}]$.
The performance of any algorithm will be dependent on $\bar{\eta}$. For the purpose of our analysis, we assume $\bar{\eta} < \frac{2\sqrt{2}}{9\pi}$. 
While these assumptions allow for a more general setting, in a more realistic noise setting; in practice, noise tends not to act adversarially, but randomly. 
The following theorem summarizes our analysis of the algorithm in the bounded adversarial noise setting.
\begin{theorem}[BAN Model Analysis]\label{thm:BAN_alg_success}
Let the Hadamard tests be subject to the bounded adversarial noise model with bound $\bar{\eta}$. 
Given that the inputs are 
\begin{align}
M\geq\left\lceil\frac{81\pi^2}{2}{\left(1 - \frac{9\pi}{2\sqrt{2}}\bar{\eta}\right)^{-2}}\ln\left(\frac{8\pi}{\delta\epsilon}\right)\right\rceil
\end{align}
and $K\geq\lceil \frac{2\pi}{\epsilon}\rceil$, Algorithm \ref{alg:dft_rae} gives an $\epsilon$-accurate estimate of any $\theta\in[0,\pi]$ with success probability greater than $1-\delta$.
\end{theorem}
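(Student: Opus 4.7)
The plan is to mirror the structure of the proof of Theorem \ref{thm:alg_success}, replacing the two supporting lemmas by their noisy analogues. Concretely, I will (i) bound the adversarial perturbation to each expected Fourier coefficient, (ii) use this bound to establish a tighter ``in-spec'' sufficient condition for success, and (iii) apply Hoeffding plus a union bound with this tighter tolerance to derive the required sample complexity.

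For step (i), I expand the deviation
\[
\eta_j \;=\; \frac{1}{K}\sum_{k=0}^{K-1}(\eta_{1,k}+i\eta_{2,k})e^{-2\pi i jk/K}
\]
and apply the triangle inequality term by term, using $|\eta_{1,k}+i\eta_{2,k}|\leq \sqrt{\eta_{1,k}^2+\eta_{2,k}^2}\leq \sqrt{2}\,\bar{\eta}$, to conclude that $|\eta_j|\leq \sqrt{2}\,\bar{\eta}$ uniformly in $j$. Thus every noisy expectation $\mathbb{E}\hat{f}_j = f_j+\eta_j$ lies within a disk of radius $\sqrt{2}\bar{\eta}$ around its noiseless counterpart.

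For step (ii), I combine this with the coefficient bounds already used in Theorem \ref{thm:alg_success} (Corollaries \ref{cor:adj_bound} and \ref{cor:non_adj_bound}) to get $|\mathbb{E}\hat{f}_j|>\tfrac{2}{\pi}-\sqrt{2}\bar{\eta}$ for a close frequency and $|\mathbb{E}\hat{f}_j|<\tfrac{10}{9\pi}+\sqrt{2}\bar{\eta}$ for any non-adjacent frequency. A direct triangle-inequality argument then shows that the algorithm still selects a close frequency provided every estimate $\hat{f}_j$ lies within $\tau^* \defeq \tfrac{4}{9\pi}-\sqrt{2}\bar{\eta}$ of its (noisy) expectation; note that the standing assumption $\bar{\eta}<\tfrac{2\sqrt{2}}{9\pi}$ is exactly what keeps $\tau^*>0$. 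This replaces Lemma \ref{lem:inspec_suff} with a noisy analogue having tolerance $\tau^*$ in place of $\tfrac{4}{9\pi}$.

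For step (iii), the Hoeffding argument of Lemma \ref{lem:inspec_likelihood} carries over verbatim since the adversarial perturbation does not enlarge the support of the bounded summands $(c_i+is_i)e^{-2\pi ijk_i/K}$, whose real and imaginary parts still take values in $[-\sqrt{2},\sqrt{2}]$. Replacing the tolerance $\tfrac{4}{9\pi}$ by $\tau^*$ yields the per-coefficient bound $4\exp(-M(\tau^*)^2/8)$, and a union bound over the $K$ frequencies gives total failure probability at most $4K\exp(-M(\tau^*)^2/8)$. Rewriting $\tau^*=\tfrac{4}{9\pi}\bigl(1-\tfrac{9\pi}{2\sqrt{2}}\bar{\eta}\bigr)$ converts this into
\[
4K\exp\!\left(-\frac{2M}{81\pi^2}\Bigl(1-\tfrac{9\pi}{2\sqrt{2}}\bar{\eta}\Bigr)^{\!2}\right),
\]
and enforcing this to be at most $\delta$ with $K=\lceil 2\pi/\epsilon\rceil$ produces the stated lower bound on $M$.

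The hard part is not conceptual but bookkeeping: the two independent $\sqrt{2}$ factors — one coming from $|\eta_{1,k}+i\eta_{2,k}|$ in the coefficient shift, and one coming from splitting $\hat{f}_j-\mathbb{E}\hat{f}_j$ into real and imaginary parts for Hoeffding — have to be tracked carefully so that they combine into the exact constant $\tfrac{9\pi}{2\sqrt{2}}$ appearing in the theorem. I also want to double-check that the noiseless in-spec lemma really uses only the magnitudes of the expected coefficients and no finer phase structure that $\eta_j$ could disrupt, so that its proof genuinely transfers to the perturbed expectations.
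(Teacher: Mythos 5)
Your proposal is correct and follows essentially the same route as the paper: bound the deterministic shift by $|\eta_j|\leq\sqrt{2}\bar{\eta}$, absorb it via the triangle inequality into a reduced tolerance $\frac{4}{9\pi}-\sqrt{2}\bar{\eta}$, and rerun Hoeffding plus the union bound (this is exactly Lemma \ref{lem:BAN_inspec_likelihood}). The only cosmetic difference is that you re-center the ``in-spec'' condition on the perturbed expectations $f_j+\eta_j$ and reprove the selection lemma with a shrunken gap, whereas the paper keeps Lemma \ref{lem:inspec_suff} verbatim (in-spec relative to the noiseless $f_j$) and applies the triangle inequality inside the probability event instead; both yield the identical constant $\left(1-\frac{9\pi}{2\sqrt{2}}\bar{\eta}\right)^{-2}$.
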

\begin{proof}
Lemma \ref{lem:inspec_suff} ensures that the algorithm succeeds if $K=\lceil 2\pi/\epsilon\rceil > 4$ 
and all estimated coefficients are in spec (i.e. within $\frac{4}{9\pi}$ of their expected value).
Lemma \ref{lem:BAN_inspec_likelihood} ensures that, under the bounded adversarial noise model with bound $\bar{\eta}$, Algorithm~\ref{alg:dft_rae} will have all coefficients in spec with likelihood at least $1-{4K}\textup{exp} \left(-\frac{2M\left(1 - \frac{9\pi}{2\sqrt{2}}\bar{\eta}\right)^2}{81\pi^2} \right)$.
To ensure that this likelihood is greater than $1-\delta$, we set $M\geq\lceil\frac{81\pi^2}{2}\left({1} - \frac{9\pi}{2\sqrt{2}}  \bar{\eta}\right)^{-2}\ln\left(\frac{8\pi}{\delta\epsilon}\right)\rceil$ and $K\geq\lceil \frac{2\pi}{\epsilon}\rceil$.
\end{proof}

Here we have chosen a very general noise model. We have shown that the algorithm can still give an accurate estimate despite a degree of error.
To accommodate this error, the algorithm can pay an overhead of $\left({1} - \frac{9\pi}{2\sqrt{2}}  \bar{\eta}\right)^{-2}$ in the number of samples, relative to the noiseless setting.
An interesting feature of this result is that there is an upper threshold $\bar{\eta}<\frac{2\sqrt{2}}{9\pi}\approx 0.1$ above which the algorithm has no success guarantee according to our analysis.
This threshold corresponds to a 0.05 deviation in the Hadamard test probabilities.
Note that when $\bar\eta = 0$, we recover the noiseless case and the bounds of Theorem \ref{thm:BAN_alg_success} match those of Theorem \ref{thm:alg_success}.\\
Next we analyze the physical settings in which the BAN model can be applicable. For a given accuracy, we estimate the required circuit depth as compared to the effective dephasing scale in order for Algorithm \ref{alg:dft_rae} to succeed with high probability.

\subsubsection*{A Physical Setting for the BAN model}
Now we discuss a physical setting in which 
we can apply the BAN model. For this discussion we use the following change in the classical signal introduced in~\cite{wiebe2016efficient}
\begin{align}
    \label{eq:dephasing_noise_signal}
    Pr(c=1|k) = e^{-{k}/{T_2}}\left( \frac{ 1 + \cos (k \theta)}{2} \right) + \frac{1 - e^{-{k}/{T_2}}}{2}.
\end{align}

We consider $k$ as a proxy for the time duration of the quantum circuit in this sense: each application of the unitary is composed of gates which take a certain amount of time to execute. Thus the execution of one application of $U$ takes a certain total amount $T$. Thus the total amount of time taken to execute the whole circuit $k$ times is $kT$. We can see this time scale to 1 and analyze the problem in these "units".
The largest potential deviation from the true probability will occur when $k$ attains its largest value $K$.
Then there are two time scales in this setting: the first is $K$, the time scale taken to run the deepest quantum circuit and the second is $T_2$, the time scale, imposed by the dephasing noise. While this physical setting might be unrealistic for the NISQ setting it's definitely plausible for an early fault tolerant device on platforms like ion-traps where physical (decay) times ($T_1$) are much longer than $T_2$ times, and can be on the order of billions of seconds. Another example where this simple dephasing model may apply is the case where the logical qubits are stored in cat states~\cite{Chamberland2022}. The cat states have been shown to have inherent  exponential suppression of bit-flip errors. Measurements  on the logical qubits to determine the scale of $T_2$ can be done~\cite{Hu2019}. For our discussion, the important quantity is the time scale ratio $\frac{K}{T_2}$. We give an estimate of the time scale ratio for which the algorithm succeeds with high probability. We compute ${\eta}_{1,k}$ as
\begin{equation}
 \label{eq:dephasin_limit_eta_bar}
\tiny{
\begin{split}
\frac{\eta_{1,k}}{2} &= e^{-{k}/{T_2}}\left( \frac{ 1 + \cos (k \theta)}{2} \right) + \frac{1 - e^{-{k}/{T_2}}}{2} - \left( \frac{ 1 + \cos(k\theta)}{2} \right) \\
        &= \frac{e^{-{k}/{T_2}}-1}{2}\cos(k\theta),
\end{split}
}
\end{equation}
and similarly for $\eta_{2,k}$, but by replacing the cosine function with sine function. In order to apply the BAN model, we must determine the largest potential magnitude realized by $\eta_{1,k}$ (and $\eta_{2,k}$) such that
\begin{align}
    |\eta_{1,k}| < \frac{2\sqrt{2}}{9\pi}.
\end{align}
Then we have 
\begin{align}
    \left|\frac{e^{-{k}/{T_2}}-1}{2}\cos(k\theta)\right| < \frac{2\sqrt{2}}{9\pi}.
\end{align}
In the worst case, $|\cos(k\theta)|=1$. Therefore we have
\begin{align}
   0 < \frac{k}{T_2} < -\ln\left(\frac{1}{2}-\frac{2\sqrt{2}}{9\pi}\right) \approx 0.916.
\end{align}
 
This gives $0 < \frac{K}{T_2} <  -\ln\left(\frac{1}{2}-\frac{2\sqrt{2}}{9\pi}\right) $ as the only physical bound on $\frac{K}{T_2}$. This implies that the circuit run-time is allowed to go to at most $0.916$ times the dephasing time before we lose performance guarantee.. \\

In the case that $K << T_2$, Equation \ref{eq:dephasing_noise_signal} can be approximated as 
 \begin{equation}
       \label{eq:high_coherence_limit}
    Pr(c=1|k) = \frac{1+ \cos(k\theta) }{2} + \frac{k}{2T_2},
\end{equation}
leading us to identify ${\eta_{1,k}} = \frac{k}{T_2} $. A similar approximation can be made to obtain $\eta_{2,k} = \frac{k}{T_2}$. 
In this limit, we imagine approximating the noisy likelihood function as getting an error term linearly related to the ratio of circuit depth and coherence time and therefore one interesting question is " what is the maximum depth of the circuit such that this simple model holds and therefore one is allowed to think of about noise entering even more simply?"

More specifically "for a given accuracy $\epsilon$, what is the highest time scale ratio such that simplified Equation \ref{eq:high_coherence_limit} provides performance guarantees"? Requiring that $\epsilon = 0.0004$ with $\bar{\eta} < \frac{2 \sqrt{2}}{9 \pi}$, we get that the dephasing time should be at least 5 times the algorithmic run-time. This means that for the given accuracy  we can approximate  (Equation \ref{eq:dephasing_noise_signal}) as (Equation \ref{eq:high_coherence_limit}) without losing the performance guarantee of the algorithm.

\subsection{Gaussian noise model}

In this subsection we look at a more realistic noise model. We still operate under the assumption that the noise is bounded; however the deviations are chosen from a specific noise model.
The physical setting that we model is that in any one run of the algorithm, the particular setting (e.g. different $\theta$ or different circuit compilations or different devices or different days of using the same device) is drawn randomly.
We model this randomness with Gaussian random variable perturbations of the ideal Hadamard test probabilities.
The \emph{Gaussian noise model} assumes that once the algorithm parameters are set, the Hadamard test biases, $\eta_{1,k}, \eta_{2,k}$, are then drawn independently from Gaussian distributions with standard deviation $\sigma$ and mean $\mu=0$.
When we analyze the success probability of the algorithm, we will assume that in any run of the algorithm the noise parameters are each drawn once and fixed for the remainder of the algorithm.
Accordingly, the expectation and variance of the estimator must include a Gaussian-weighted average over the noise parameters.

Our analysis of the performance of the algorithm under this model will leverage the results from the bounded adversarial noise model.
The approach is as follows.
Observe that for any choice of $\bar{\eta} < \frac{2\sqrt{2}}{9\pi}$, the Gaussian noise model will satisfy the BAN bound with some probability.
In the case where the BAN bound is satisfied, then the probability of success is lower bounded by the results of Theorem \ref{thm:BAN_alg_success}. 
Using the union bound, the probability of failure in the Gaussian noise setting is upper bounded by the sum of the probability of violating the BAN bound and the probability of the algorithm failing given that the BAN bound is satisfied.
We take a strategy that aims to establish a simple expression for the number of measurements required to ensure a success probability greater than $1-\delta$.
This approach compromises the tightness of our results in exchange for simplicity of presentation. Our analysis of the algorithm under the Gaussian noise model is  essentially a corollary of the analysis of the algorithm under the BAN model.

We first choose $\bar{\eta}$ of the BAN model such that the Gaussian noise model violates this bound with no more than $\frac{\delta}{2}$ probability. Lemma \ref{lem:gaussian_failure} gives that we should choose $\bar{\eta}^2=\frac{\sigma^2}{4K}\ln{\left(\frac{8K}{\delta}\right)}$, so that the failure probability upper bound is $\frac{\delta}{2}$.
With this choice of $\bar{\eta}$, the success of the algorithm under the Gaussian noise model is simply a corollary of Theorem \ref{thm:BAN_alg_success}.
\begin{corollary}
[Gaussian Noise Model Analysis]\label{thm:GN_alg_success}
Let the Hadamard tests be subject to the Gaussian noise model with variance $\sigma^2< \frac{64}{81\pi\epsilon}\ln\left(\frac{16\pi}{\delta\epsilon}\right)$. 
Given that the inputs are
\begin{align}
M\geq\left\lceil\frac{81\pi^2}{2}\ln\left(\frac{16\pi}{\delta\epsilon}\right) \left( 1-\frac{9\sigma}{8}\sqrt{\frac{\epsilon\pi}{\ln\left(\frac{16\pi}{\delta\epsilon}\right)}}\right)^{-2}\right\rceil\end{align}
and $K\geq\lceil \frac{2\pi}{\epsilon}\rceil$ 
Algorithm \ref{alg:dft_rae} gives an $\epsilon$-accurate estimate of any $\theta\in[0,\pi]$ with success probability greater than $1-\delta$.
\end{corollary}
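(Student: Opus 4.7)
The plan is to prove the Gaussian case as a straightforward reduction to the bounded adversarial analysis of Theorem \ref{thm:BAN_alg_success}. Exactly as sketched in the paragraph preceding the statement, I will decompose the failure event of Algorithm \ref{alg:dft_rae} via the union bound: letting $A$ denote the event that the realized noise satisfies $|\eta_{1,k}|,|\eta_{2,k}|\leq\bar{\eta}$ for every $k\in\{0,\ldots,K-1\}$, one has
\begin{equation}
\mathbb{P}(\text{failure}) \;\leq\; \mathbb{P}(A^c) \;+\; \mathbb{P}(\text{failure}\mid A).
\end{equation}
The first term is a purely Gaussian tail quantity, and the second term, conditional on $A$, is bounded by the BAN analysis since once $A$ holds the conditional distributions of the $\eta$'s fall within the scope of Theorem \ref{thm:BAN_alg_success} (which is a worst-case guarantee over all $\eta$'s in $[-\bar{\eta},\bar{\eta}]$).

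First I would handle $\mathbb{P}(A^c)$ using the standard Gaussian tail bound $\mathbb{P}(|\eta|>\bar{\eta})\leq 2\exp(-\bar{\eta}^2/(2\sigma^2))$ applied to each of the $2K$ independent draws and combined by a union bound; this is the content of the invoked Lemma \ref{lem:gaussian_failure}. Solving the resulting inequality for $\bar{\eta}$ so that $\mathbb{P}(A^c)\leq \delta/2$ yields the prescribed choice $\bar{\eta}^2=\frac{\sigma^2}{4K}\ln(8K/\delta)$. Second, I would apply Theorem \ref{thm:BAN_alg_success} with the failure budget $\delta/2$ in place of $\delta$ and with this particular $\bar{\eta}$, giving
\begin{equation}
\mathbb{P}(\text{failure}\mid A) \;\leq\; \tfrac{\delta}{2}
\qquad\text{whenever}\qquad
M \geq \left\lceil \tfrac{81\pi^2}{2}\bigl(1-\tfrac{9\pi}{2\sqrt{2}}\bar{\eta}\bigr)^{-2}\ln\!\bigl(\tfrac{16\pi}{\delta\epsilon}\bigr)\right\rceil .
\end{equation}

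The final step is bookkeeping: substitute $\bar{\eta}^2=\frac{\sigma^2}{4K}\ln(8K/\delta)$ with $K=\lceil 2\pi/\epsilon\rceil$ into the factor $\frac{9\pi}{2\sqrt{2}}\bar{\eta}$ and simplify, which (after combining $\ln(8K/\delta)$ with $K\approx 2\pi/\epsilon$ into $\ln(16\pi/(\delta\epsilon))$ and absorbing constants) reproduces the term $\frac{9\sigma}{8}\sqrt{\epsilon\pi/\ln(16\pi/(\delta\epsilon))}$ appearing in the statement. The hypothesis $\sigma^2<\frac{64}{81\pi\epsilon}\ln(16\pi/(\delta\epsilon))$ is then exactly the condition that this chosen $\bar{\eta}$ remains under the BAN threshold $\bar{\eta}<\frac{2\sqrt{2}}{9\pi}$, so that Theorem \ref{thm:BAN_alg_success} is legitimately applicable and the denominator $(1-\cdot)$ is positive. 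Adding the two probability contributions gives $\mathbb{P}(\text{failure})\leq \delta/2+\delta/2=\delta$, as required.

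The only subtle point—and the place I would be most careful—is the order of randomization. Because the paper models the $\eta_{j,k}$ as drawn once at the beginning of a run and then held fixed throughout the $M$ Hadamard samples, conditioning on the event $A$ yields a sample-wise distribution whose bias is a fixed sequence in $[-\bar{\eta},\bar{\eta}]$, which is exactly the regime Theorem \ref{thm:BAN_alg_success} covers (it is a worst-case statement over all such sequences). The rest of the argument is algebraic simplification; no new concentration inequality beyond Gaussian tails plus the Hoeffding step already used inside the BAN proof is needed.
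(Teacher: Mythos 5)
Your overall architecture matches the paper's: a union bound splitting the failure probability into the probability that the Gaussian noise violates a BAN-type bound (budget $\delta/2$) plus the conditional failure probability given that it does not (budget $\delta/2$, handled by Theorem \ref{thm:BAN_alg_success}), together with the observation that, because the noise is drawn once and held fixed, conditioning on the bounded event leaves a fixed noise sequence covered by the worst-case BAN guarantee. That part is sound. The gap is in \emph{which} event you condition on, and consequently in the size of the $\bar{\eta}$ you can afford. You take $A$ to be the event that every individual time-domain deviation $\eta_{1,k},\eta_{2,k}$ lies in $[-\bar{\eta},\bar{\eta}]$ and bound $\textup{Pr}(A^c)$ by $2K$ Gaussian tails. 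Solving $4K\exp{-\bar{\eta}^2/(2\sigma^2)}\leq\delta/2$ gives $\bar{\eta}^2=2\sigma^2\ln{8K/\delta}$, \emph{not} the value $\bar{\eta}^2=\frac{\sigma^2}{4K}\ln{8K/\delta}$ you quote; your claimed derivation is off by a factor of order $K$. With $\bar{\eta}\sim\sigma\sqrt{\ln{8K/\delta}}$, the BAN threshold $\bar{\eta}<\frac{2\sqrt{2}}{9\pi}$ forces $\sigma=O(1/\sqrt{\ln{1/(\delta\epsilon)}})$, which is far more restrictive than the corollary's hypothesis $\sigma^2<\frac{64}{81\pi\epsilon}\ln\left(\frac{16\pi}{\delta\epsilon}\right)$ (which lets $\sigma$ grow as $\epsilon\to0$), and the substitution cannot reproduce the term $\frac{9\sigma}{8}\sqrt{\epsilon\pi/\ln\left(16\pi/(\delta\epsilon)\right)}$: nothing in your argument produces the factor $\sqrt{\epsilon}$.

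The missing idea is that the BAN analysis (Lemma \ref{lem:BAN_inspec_likelihood}) only ever uses the noise through its aggregated Fourier-domain contribution $\hat{\eta}_j=\mathbb{E}_{k}(\eta_{1,k}+i\eta_{2,k})e^{-2\pi ijk/K}$, which is an average of $K$ independent Gaussians and therefore has variance $2\sigma^2/K$ rather than $\sigma^2$. The paper's Lemma \ref{lem:gaussian_failure} controls exactly this quantity, $\textup{Pr}(\exists j:\ |\hat{\eta}_j|\geq\bar{\eta})$, via the complex concentration inequality of Lemma \ref{lem:complex_variance}; it is this $1/K=\Theta(\epsilon)$ variance suppression that yields $\bar{\eta}=\Theta(\sigma\sqrt{\epsilon})$ up to logarithmic factors, hence the $\sqrt{\epsilon\pi}$ in the sample-count formula and the $1/\epsilon$ growth in the admissible $\sigma^2$. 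To repair your proof, redefine $A$ as the event that all $K$ Fourier-domain contributions $\hat{\eta}_j$ are bounded by $\bar{\eta}$, invoke Lemma \ref{lem:gaussian_failure} for that event, and note that the BAN machinery needs only this bound (not per-$k$ bounds). Be aware also that the paper's own prose states $\bar{\eta}^2=\frac{\sigma^2}{4K}\ln\left(\frac{8K}{\delta}\right)$, whereas only $\bar{\eta}^2=\frac{\sigma^2}{4K\ln(8K/\delta)}$ is consistent with the constants in the corollary, so the value you carry into Theorem \ref{thm:BAN_alg_success} should be rederived rather than copied.
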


This shows that, under a model of random error with bounded variance, with high probability an accurate estimate can be achieved by increasing the number of samples relative to the noiseless case.

\section{Conclusion and Outlook}

\label{sec:discussion}

This work contributes to building a bridge between the noisy intermediate-scale quantum (NISQ) era and the fault-tolerant quantum computing (FTQC) era. We are motivated by the question of how can we predict the 
performance of quantum algorithms in the era when quantum error correction is costly and algorithms will benefit from a degree of robustness. ``Robustness'' is used to describe algorithms with reliable performance even under noisy conditions. In this early fault-tolerant quantum computing (EFTQC) era, there is a need for analytical tools that 
can ensure the robustness of a broad class of quantum algorithms.

\begin{figure*}
    \centering
    \includegraphics[width=0.7\textwidth]{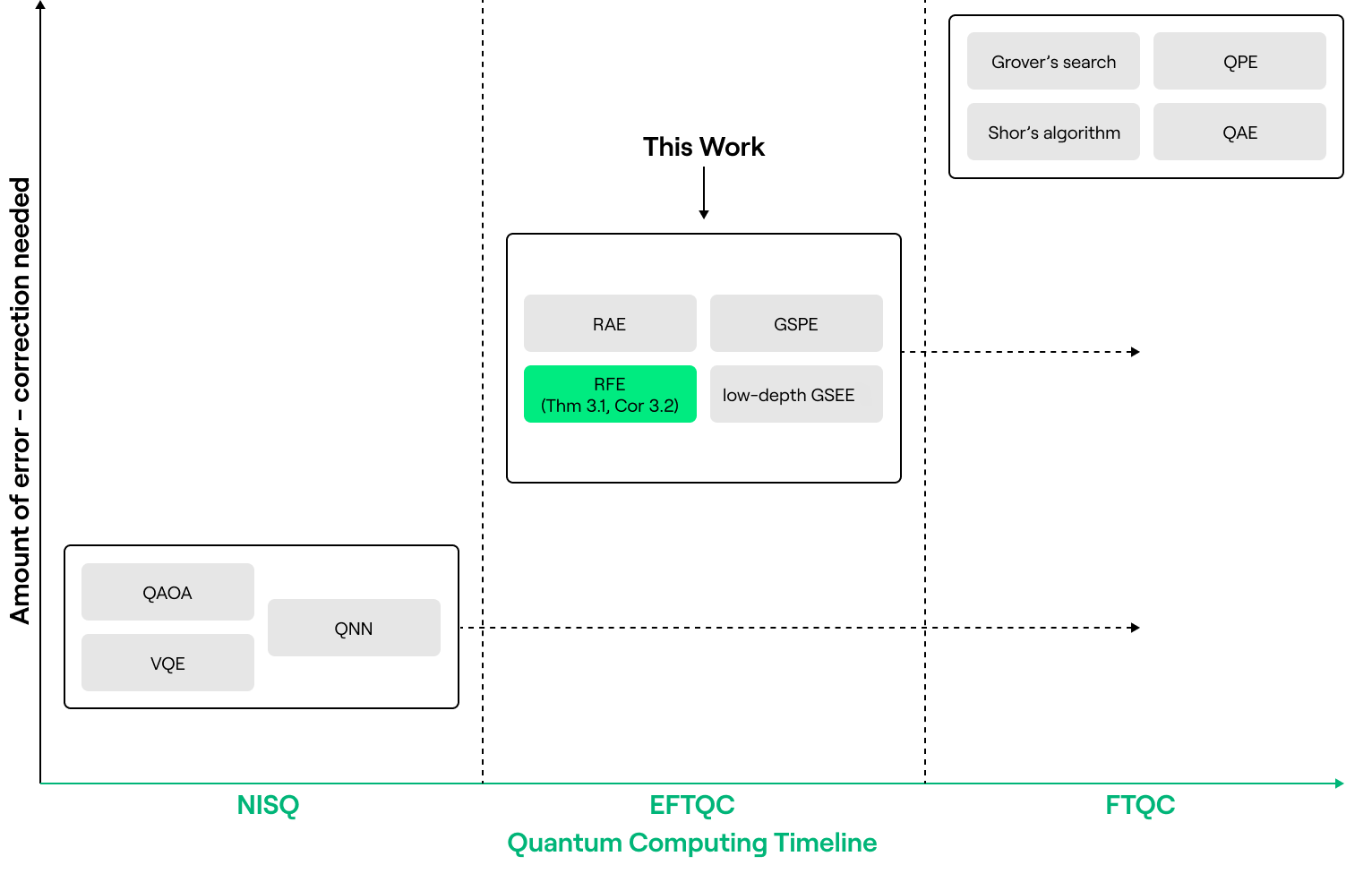}
    \caption{In the EFTQC era, error-correction will be costly and a degree of robustness will benefit quantum algorithms. Theorem \ref{thm:BAN_alg_success} and Corollary \ref{thm:GN_alg_success}  quantify degree of robustness to guarantee success of Algorithm \ref{alg:dft_rae} under noisy conditions. 
    This figure categorizes quantum algorithms according to the degree of quantum error-correction they require and therefore the era of quantum computing for which they are suited. We have used acronyms to abbreviate following algorithms quantum alternating operator ansatz (QAOA)~\cite{farhi2014quantum},  variational quantum eigensolver (VQE)~\cite{peruzzo2014variational}, quantum neural networks (QNN)~\cite{farhi2018classification}, robust amplitude estimation (RAE)~\cite{wang2021minimizing}, low-depth ground state energy estimation (GSEE)\cite{lin2022heisenberg, wang2022quantum}, ground state property estimation (GSPE)\cite{zhang2022computing}, randomized Fourier estimation (RFE, developed in this work), quantum phase estimation (QPE)~\cite{kitaev1995quantum}, and quantum amplitude estimation (QAE)~\cite{brassard2000quantum}.}
    \label{fig:NISQ-FT}
\end{figure*}

To address this need, we develop and analyze a new algorithm for the EFTQC era that exhibits characteristics of many recently-developed quantum algorithms for EFTQC~\cite{lin2022heisenberg, zhang2022computing, wan2022randomized, dong2022ground, wang2022quantum}. As illustrated in Figure \ref{fig:NISQ-FT} the tools that we developed are well-suited for analyzing the robustness of such algorithms. Our analysis paves the way to understanding algorithms that are robust to some degree of error. We leave determining the exact levels of errors and thus determining the cost of error-correction for future work.

The randomized Fourier estimation (RFE) algorithm estimates the phase angle $\theta$ given the operation $c$-$U$ and input state $\ket{\psi}$, where $U\ket{\psi}=e^{i\theta}\ket{\psi}$.
With a single ancilla qubit, using classical randomization and quantum measurements, the algorithm estimates the discrete Fourier spectrum of the time signal $g(k)=e^{i\theta k}$ to learn the location of the spectral peak that encodes $\theta$.
With this prototypical EFTQC quantum algorithm established, we analyze its robustness to error.
We develop a methodology for modeling the impact of error on such algorithms.
To this end, we introduce an important concept: an \emph{algorithmic noise model}, which lets us describe the pertinent features of circuit-level noise on the functioning of the quantum algorithm.
Through this concept we establish rigorous performance guarantees on the RFE algorithm subject to such noise models.
This is a step towards understanding the viability of \emph{EFTQC} and designing algorithms suitable for this era. 

Like many quantum algorithms for phase estimation, our algorithm has a runtime that scales as ${O}(\frac{1}{\epsilon}\log{\frac{1}{\delta\epsilon}})$. 
We expect that the upper bounds obtained in the analysis are not tight and we expect that more sophisticated algorithms for phase estimation~\cite{ji2008parameter, dobvsivcek2007arbitrary, wiebe2016efficient} 
would have better performance.
However, the RFE algorithm facilitates an intuitive analysis of the impact of noise on performance. In the most general noise setting, as long as the noise level is bounded by $\bar{\eta}$, our algorithm has a runtime that scales as ${O}(\frac{1}{\epsilon}\log{\frac{1}{\delta\epsilon}})$ at the cost of extra measurements (dependent on $\bar{\eta}$). We also discuss a physical setting that can be related to such a noise model. In the era of costly quantum error correction it will be important to weigh the cost-benefit of varying degree of quantum error correction. Error correction reduces $\bar{\eta}$ and therefore reduces the expected runtime, but it also comes at its own increased runtime of error correction overhead. Our methodology is a tool to enable researchers to explore such trade-offs. 

Towards such exploration, an important future direction is the analysis of RFE and related algorithms under more realistic noise models. 
In particular, it would be valuable to develop algorithmic noise models which are parameterized by quantities determined from the quantum device. This would help to make the overall methodology more predictive. \cite{Liang:2023dzt} demonstrates a step in this direction.

Our work sets a foundation for future investigations of the robustness of algorithms for EFTQC. Other tasks, such as amplitude estimation, ground state energy estimation, and ground state property estimation, can be viewed as more-involved variants of the phase estimation task.
They vary in the nature of the input state $\ket{\psi}$ and its relationship to the unitary $U$.

For example, ground state energy estimation can be implemented in the setting where $\ket{\psi}$ is a superposition of eigenstates of $H$ and $U$ approximates time evolution under this Hamiltonian~\cite{lin2022heisenberg}.

As with RFE, recent early fault-tolerant algorithms developed for these tasks have a structure that is amenable to our robustness analysis~\cite{lin2022heisenberg, zhang2022computing, wan2022randomized, dong2022ground, wang2022quantum}.
In each of these algorithms, the quantum computer is used to gather time signal data and properties of the signal's spectrum are estimated by postprocessing this data.
Our methodology may be used to model error in time signal and analyze the robustness of such quantum algorithms.
We hope that this work equips researchers with tools to navigate the era of early fault-tolerant quantum computing.

\vspace{1cm}

\noindent\textbf{Acknowledgements} We thank Yiqing Zhou and Qiyao Liang for many insightful discussions during the course of this work. We thank Archismita Dalal, Max Radin, and Daniel Stilck França for helpful feedback on earlier versions of the manuscript and Zofia Włoczewska for help in generating figures.\\

\onecolumn
\appendix

\section{Proofs of Main Results}

 In this section, we discuss the preliminaries necessary to prove results about performance of Algorithm \ref{alg:dft_rae}. 
 The following lemma determines how the distance between the Fourier coefficients estimated in the algorithm and the actual Fourier coefficients impacts the distance between the output phase and actual phase.

\begin{lemma}
\label{lem:inspec_suff}
If $K=\lceil \frac{2\pi}{\epsilon}\rceil\geq 4$ and during Algorithm \ref{alg:dft_rae} $|\hat{f}_j-{f}_j|\leq \frac{4}{9\pi}$ for all $j \in \{0,\ldots,K-1\}$, then Algorithm \ref{alg:dft_rae} outputs $\hat{\theta}$ such that $|\hat{\theta}-\theta|\leq \epsilon$.
\end{lemma}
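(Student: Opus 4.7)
The plan is to show that, under the hypotheses, the argmax index selected by Algorithm~\ref{alg:dft_rae} must be an \emph{adjacent} frequency to $\theta$, i.e.\ an index $j$ satisfying $|j - K\theta/(2\pi)| \le 1$, which already guarantees $|\hat\theta - \theta| \le 2\pi/K \le \epsilon$ by the choice $K = \lceil 2\pi/\epsilon\rceil$. The argument rests on a magnitude-separation between a ``close'' index and every ``non-adjacent'' index, combined with the in-spec hypothesis $|\hat f_j - f_j| \le 4/(9\pi)$.

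First I would fix notation. Let $j^{\star}$ be a close index, meaning $|j^{\star} - K\theta/(2\pi)| \le 1/2$ (such a $j^{\star}$ exists by the definition of rounding, with the mild $K\ge 4$ assumption ensuring the Dirichlet-kernel estimates in Corollaries~\ref{cor:adj_bound} and~\ref{cor:non_adj_bound} are applicable). Call an index $j$ \emph{non-adjacent} if $|j - K\theta/(2\pi)| > 1$; all remaining indices are adjacent and, by the choice of $K$, correspond to phase estimates within $\epsilon$ of $\theta$.

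Next I would invoke the two expected-coefficient bounds that are already stated in the paper: Corollary~\ref{cor:adj_bound} gives $|f_{j^{\star}}| > 2/\pi$, and Corollary~\ref{cor:non_adj_bound} gives $|f_j| < 10/(9\pi)$ for every non-adjacent $j$. Using the reverse triangle inequality together with the in-spec hypothesis,
\begin{align*}
|\hat f_{j^{\star}}| \;\ge\; |f_{j^{\star}}| - \tfrac{4}{9\pi} \;>\; \tfrac{2}{\pi} - \tfrac{4}{9\pi} \;=\; \tfrac{14}{9\pi},
\end{align*}
while for any non-adjacent $j$,
\begin{align*}
|\hat f_j| \;\le\; |f_j| + \tfrac{4}{9\pi} \;<\; \tfrac{10}{9\pi} + \tfrac{4}{9\pi} \;=\; \tfrac{14}{9\pi}.
\end{align*}
Therefore $|\hat f_{j^{\star}}| > |\hat f_j|$ strictly for every non-adjacent $j$, so the index $\hat\jmath := \argmax_j |\hat f_j|$ cannot be non-adjacent; it must be adjacent to $\theta$. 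Consequently $|\hat\theta - \theta| = |2\pi \hat\jmath/K - \theta| \le 2\pi/K \le \epsilon$, finishing the proof.

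The main obstacle is really bookkeeping rather than a substantive difficulty: one must be careful that the strict inequalities in the two corollaries survive the triangle inequality so that the argmax cannot accidentally land on a non-adjacent index, and one must handle the edge case where $K\theta/(2\pi)$ is equidistant between two integers (both are then close and both yield error at most $\pi/K \le \epsilon/2$). The requirement $K \ge 4$ enters only as a hypothesis needed to apply Corollaries~\ref{cor:adj_bound} and~\ref{cor:non_adj_bound}, and the choice $K = \lceil 2\pi/\epsilon\rceil$ furnishes the final resolution bound $2\pi/K \le \epsilon$.
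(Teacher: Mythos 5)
Your proposal is correct and follows essentially the same route as the paper's own proof: invoke Corollary~\ref{cor:adj_bound} and Corollary~\ref{cor:non_adj_bound} to separate the expected magnitudes of a close frequency and any non-adjacent frequency by $\frac{8}{9\pi}$, then use the in-spec tolerance of $\frac{4}{9\pi}$ on each side to conclude the argmax must be an adjacent index, so that $|\hat\theta-\theta|\leq \frac{2\pi}{K}\leq\epsilon$. Your explicit bookkeeping of the strict inequalities (so the two bounds meeting at $\frac{14}{9\pi}$ cannot produce a tie) is a point the paper glosses over, but the argument is the same.
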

\begin{proof}
 We will show that when the coefficient estimates are in spec, the algorithm is guaranteed to output one of the two ``adjacent'' discrete frequencies (i.e. those nearest to $\theta$). Choosing $K=\lceil \frac{2\pi}{\epsilon}\rceil$ then ensures that outputting either of these adjacent frequencies gives $|\hat{\theta}-\theta|\leq \epsilon$. 
It remains to show that when all Fourier coefficients are in spec, the algorithm outputs an adjacent frequency. 
Our approach will be to lower bound the magnitude of the expected coefficient of two frequencies adjacent to $\theta$ and upper bound the magnitude of the expected coefficient of any non-adjacent frequency. These bounds will ensure a gap of $\frac{4}{9\pi}$ between the magnitudes of these expected coefficients.
The ``in spec'' distance of $\frac{4}{9\pi}$ then ensures that if all estimated coefficients are within spec, an adjacent frequency will have the largest magnitude of all frequencies.
Consequently, the algorithm will output an adjacent frequency.

The smallest expected magnitude of a ``close'' frequency (i.e. the adjacent frequency closer to $\theta$) is greater than $\frac{2}{\pi}$ and Corollary \ref{cor:non_adj_bound} ensures that the expected magnitude of any non-adjacent frequency is less than $\frac{10}{9\pi}$. 
The minimum distance between the magnitudes of a close frequency and any non-adjacent frequency is greater than $\frac{4}{9\pi}$.
So when the estimates are in spec (i.e. deviating by no more than $\frac{4}{9\pi}$ from their expected value), none of the non-adjacent frequencies can have an estimated coefficient of largest magnitude.
\end{proof}

In order to prove Lemma \ref{lem:SK_upper_bound}, Corollary \ref{cor:adj_bound}, Lemma \ref{lem:SK_lower_bound} and Corollary \ref{cor:non_adj_bound} we will make use of Lemma \ref{lem:coeff_bounds} which gives a general upper and lower bounds on the expected coefficients. %The second is a mathematical statement upper bounding a function within a region.

\begin{lemma}
\label{lem:coeff_bounds}
In Algorithm \ref{alg:dft_rae}, the magnitude of the expected Fourier coefficient of discrete frequency label $j$ evaluates to
\begin{align}
  |{f}_j| = \left|{S_K\left(j-\frac{K\theta}{2\pi}\right)}\right|, 
\end{align}
where $S_{J}(x):=\frac{\sin{\pi x}}{J\sin{(\frac{\pi x}{J})}}$. Note: for large $J$, $S_{J}(x)$ approximates a sinc function, 
$S_J(x)=\frac{\sin(\pi x)}{\pi x}+O(\frac{1}{J^2})$.
% $\displaystyle \lim_{J\rightarrow\infty}S_J(x)=\frac{\sin(\pi x)}{\pi x}$.
\end{lemma}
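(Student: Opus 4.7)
The plan is to derive the claimed closed form by a direct manipulation of the expression for $f_j$ obtained at the end of \eqref{eq:noiseless_expectation}, and then handle the asymptotic remark by a simple Taylor expansion.

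First I would abbreviate $\phi := \theta - \frac{2\pi j}{K}$, so that $f_j = \frac{1}{K}\cdot\frac{1-e^{iK\phi}}{1-e^{i\phi}}$. The key identity is the half-angle factorization
\begin{equation}
1 - e^{i\alpha} \;=\; -2i\,\sin\!\left(\tfrac{\alpha}{2}\right)e^{i\alpha/2},
\end{equation}
which applied to both numerator and denominator makes the exponential prefactors cancel in magnitude:
\begin{equation}
|f_j| \;=\; \frac{1}{K}\cdot\frac{|\sin(K\phi/2)|}{|\sin(\phi/2)|}.
\end{equation}
(One has to be mildly careful at points where $\phi \in 2\pi\mathbb{Z}$; there the ratio is defined by continuity and equals $1$, which matches $S_K(0)=1$.)

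Next I would substitute $x := j - \frac{K\theta}{2\pi}$, so that $\phi = -\frac{2\pi x}{K}$, giving $K\phi/2 = -\pi x$ and $\phi/2 = -\pi x/K$. The two sines become $|\sin(\pi x)|$ and $|\sin(\pi x/K)|$, and since both are even in the sign of $x$ one immediately obtains
\begin{equation}
|f_j| \;=\; \frac{|\sin(\pi x)|}{K\,|\sin(\pi x/K)|} \;=\; |S_K(x)|,
\end{equation}
which is the claimed formula.

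For the asymptotic remark, I would Taylor-expand the denominator: $\sin(\pi x/K) = \frac{\pi x}{K} - \frac{1}{6}(\pi x/K)^3 + O(K^{-5})$, so that $K\sin(\pi x/K) = \pi x\bigl(1 - \frac{(\pi x)^2}{6K^2} + O(K^{-4})\bigr)$. Inverting the bracketed factor and multiplying by $\sin(\pi x)$ yields $S_K(x) = \frac{\sin(\pi x)}{\pi x} + O(K^{-2})$ for fixed $x$, as stated. There is no real obstacle in this proof; the only thing that requires care is bookkeeping of the half-angle identity and the sign substitution, and the observation that the exponential phases produced by the half-angle factorization are killed by taking the modulus, so that $|f_j|$ is precisely the Dirichlet-kernel-type quantity $S_K(\,\cdot\,)$.
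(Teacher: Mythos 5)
Your proof is correct and follows essentially the same route as the paper: starting from the geometric-sum expression for $f_j$, applying the half-angle factorization $1-e^{i\alpha}=-2i\sin(\alpha/2)e^{i\alpha/2}$ to numerator and denominator (which is exactly how the paper extracts the prefactor $\omega^{(j-\Theta)(K-1)/2}$ times $\sin(\pi(j-\Theta))/\sin(\pi(j-\Theta)/K)$), and taking the modulus to kill the phase. Your additional care at the removable singularity $\phi\in 2\pi\mathbb{Z}$ and your Taylor-expansion verification of the sinc approximation are fine details the paper omits.
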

\begin{proof}
Defining $\omega =\exp{\frac{2\pi i}{K}}$, we can express the expected Fourier coefficient of discrete frequency label $j$ as
\begin{align}
    f_j &= \mathbb{E}_{s,c,k} (c+is)\omega^{jk}\\
    &= \mathbb{E}_{k} e^{ik\theta}\omega^{jk}.
\end{align}
Defining $\Theta = K\theta/2\pi$, we have
\begin{align}
    f_j &= \mathbb{E}_{k} \omega^{-\Theta k}\omega^{jk}\\
    &= \mathbb{E}_{k} \omega^{(-\Theta+j) k}\\\
    &= \frac{1}{K}\left(\frac{1-\omega^{(j-\Theta)K}}{1-\omega^{(j-\Theta)}}\right)\\
    &= \frac{1}{K}\left(\omega^{(j-\Theta)(K-1)/2}\frac{\sin{\pi(j-\Theta)}}{\sin{\frac{\pi (j-\Theta)}{K}}}\right) 
\end{align}
Defining $S_{K}(x)=\frac{\sin{\pi x}}{K\sin{\frac{\pi x}{K}}}$,
we can write the magnitude of the Fourier coefficient as
\begin{align}
    |f_j| = \left|{S_K(j-\Theta)}\right|.
\end{align}
\end{proof}

\begin{lemma}
\label{lem:SK_upper_bound}
For $|x|\leq 1/2$, 
\begin{align}
    \frac{2}{\pi}\leq \left|\frac{\sin{\pi x}}{K\sin{\frac{\pi x}{K}}}\right|.
\end{align}
\end{lemma}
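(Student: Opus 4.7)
The plan is to reduce the bound to two standard trigonometric inequalities. First, by the oddness of $\sin$, both $\sin(\pi x)$ and $\sin(\pi x/K)$ change sign with $x$, so $S_K(x)$ is even and I may assume $x \in [0, 1/2]$ (treating $x=0$ as the limit value $1 \geq 2/\pi$).

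Next I would apply the elementary inequality $\sin(y) \leq y$ for $y \geq 0$ with $y = \pi x/K$. Since $x \leq 1/2$ and $K \geq 1$, the argument $\pi x/K$ lies in $[0, \pi/2]$, so the inequality applies and gives
\begin{align}
K \sin\!\left(\tfrac{\pi x}{K}\right) \leq \pi x.
\end{align}
Dividing through, this yields the lower bound
\begin{align}
\left|\frac{\sin(\pi x)}{K \sin(\pi x/K)}\right| \;\geq\; \frac{\sin(\pi x)}{\pi x},
\end{align}
where the numerator is nonnegative since $\pi x \in [0, \pi/2]$.

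The final step is Jordan's inequality: $\frac{\sin y}{y} \geq \frac{2}{\pi}$ for all $y \in [0, \pi/2]$. This is immediate from the concavity of $\sin$ on $[0, \pi/2]$ (the chord from $(0,0)$ to $(\pi/2, 1)$ lies below the graph), with equality at the endpoint $y = \pi/2$. Applying this with $y = \pi x$ and chaining the inequalities gives $|S_K(x)| \geq \frac{2}{\pi}$, as required.

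I do not expect a genuine obstacle here; the lemma is a routine two-line estimate. The only care required is handling signs via absolute values and noting that both applied inequalities have their equality cases at the extreme point $x = 1/2$ (so the bound is tight in the limit $K \to \infty$, matching the sinc approximation mentioned in Lemma \ref{lem:coeff_bounds}). If a cleaner presentation is preferred, one can collapse the argument into a single display by writing $|S_K(x)| = \frac{|\sin \pi x|}{\pi x} \cdot \frac{\pi x/K}{\sin(\pi x/K)}$ and observing that each factor is at least $2/\pi$ and at least $1$ respectively on the stated range.
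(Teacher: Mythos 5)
Your proof is correct, and it takes a different (and in fact tighter-argued) route than the paper's. The paper reduces to the extreme point in two monotonicity steps: it asserts that $|S_K(x)|$ is monotonically decreasing on $[0,\tfrac12]$, evaluates at $x=\tfrac12$ to get $\bigl|\tfrac{1}{K\sin(\pi/2K)}\bigr|$, and then asserts that this quantity decreases in $K$ to its limit $\tfrac{2}{\pi}$. Both monotonicity claims are true but are stated without justification (the first, for instance, requires knowing that $t\cot t$ is decreasing, or an equivalent fact). You instead factor the bound into two standard one-line inequalities, $K\sin(\pi x/K)\leq \pi x$ and Jordan's inequality $\sin(y)/y\geq 2/\pi$ on $[0,\pi/2]$, each of which you justify (concavity of $\sin$). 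The net effect is the same chain $|S_K(x)|\geq \sin(\pi x)/(\pi x)\geq 2/\pi$, but your version is self-contained where the paper's leans on unproved monotonicity; the paper's version, for its part, makes explicit where the infimum is attained. Your observation that both of your inequalities are tight at $x=\tfrac12$ in the limit $K\to\infty$ recovers that information anyway. One minor point worth keeping in a final write-up: at $x=0$ the displayed expression is formally $0/0$, so the statement should be read as the limiting value $1\geq 2/\pi$ there, as you note.
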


\begin{proof}
Note that $S_K(x)=\frac{\sin{\pi x}}{K\sin{(\frac{\pi x}{K})}}$. On the interval $0 \leq x \leq \frac{1}{2}$ the function $|S_K(x)|$ is a monotonically decreasing function and it achieves minimum at $x = \frac{1}{2}$. Therefore 
\begin{align}
    |S_K(x)|\geq \left|\frac{\sin\left(\frac{\pi}{2}\right)}{K \sin\left(\frac{\pi}{2K}\right)}\right|.
\end{align}
Using that this function is monotonically decreasing with $K$ 
 and for all $K>1$, $|S_K(x)|$ is lower bounded by the limit of this function as $K$ goes to infinity, that is

\begin{align}
|S_K(x)|\geq\frac{2}{\pi}. 
\end{align}
\end{proof}

\begin{corollary}
\label{cor:adj_bound} 
We have for $j$ satisfying $|j-\Theta|\leq \frac{1}{2}$, 
\begin{align}
|f_j|\geq\frac{2}{\pi}.
\end{align}
In other words, the smallest expected magnitude of a close frequency is greater than $\frac{2}{\pi}$. 
\end{corollary}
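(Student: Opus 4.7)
The plan is to obtain this as an immediate consequence of the two previous results, Lemma \ref{lem:coeff_bounds} and Lemma \ref{lem:SK_upper_bound}. Since Lemma \ref{lem:coeff_bounds} gives the exact identity $|f_j| = \left|S_K\!\left(j-\frac{K\theta}{2\pi}\right)\right| = |S_K(j-\Theta)|$, the task reduces entirely to lower bounding $|S_K(x)|$ at the specific argument $x = j - \Theta$.

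First I would set $x := j - \Theta$ and note that the hypothesis $|j - \Theta| \leq \tfrac{1}{2}$ is exactly the hypothesis $|x| \leq \tfrac{1}{2}$ required by Lemma \ref{lem:SK_upper_bound}. Then I would simply invoke that lemma to conclude $|S_K(x)| \geq \tfrac{2}{\pi}$, and chain this with the identity from Lemma \ref{lem:coeff_bounds} to obtain $|f_j| \geq \tfrac{2}{\pi}$, which is what is claimed.

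There is essentially no obstacle here, since all of the analytical work (bounding the Dirichlet-kernel expression $S_K(x)$ on $[-\tfrac{1}{2},\tfrac{1}{2}]$ by its infinite-$K$ limit $\tfrac{2}{\pi}$) has already been carried out in Lemma \ref{lem:SK_upper_bound}. The only thing one might wish to add for clarity is a single sentence observing that the condition ``close frequency,'' defined in the main text as having $|j - \Theta| \leq \tfrac{1}{2}$, is precisely the set on which $S_K$ is minimized at the endpoint $x = \pm \tfrac{1}{2}$, so that the stated bound $\tfrac{2}{\pi}$ is the smallest value attained — which justifies calling this the smallest expected magnitude of a close frequency. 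The corollary is thus a two-line consequence of the preceding lemmas.
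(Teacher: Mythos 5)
Your proposal is correct and follows exactly the same route as the paper's own proof: invoke the identity $|f_j| = \left|S_K(j-\Theta)\right|$ from Lemma \ref{lem:coeff_bounds} and then apply Lemma \ref{lem:SK_upper_bound} with $x = j - \Theta$, $|x| \leq \tfrac{1}{2}$. Nothing is missing.
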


\begin{proof}
Our starting point is the expression for $|f_j|$ given in Lemma \ref{lem:coeff_bounds},
\begin{align}
|f_j| =\left|{S_K\left(j-\frac{K\theta}{2\pi}\right)}\right| = \left|S_K(j - \Theta)\right|,
\end{align}
where $\Theta = \frac{K\theta}{2\pi}$. Next, the sum $j+\Theta$ must lie between 0 and $K$. Since $j$ corresponds to a close frequency (i.e. $|j-\Theta|\leq \frac{1}{2}$) and because on the interval $0\leq |j-\Theta|\leq \frac{1}{2}$, using Lemma \ref{lem:SK_upper_bound} we have 
\begin{align}
    |f_j|\geq \frac{2}{\pi}.
\end{align}
\end{proof}

\begin{lemma}
\label{lem:SK_lower_bound}
For $1\leq |x|\leq K/2$, $K\geq 4$,
\begin{align}
    \left|\frac{\sin{\pi x}}{K\sin{\frac{\pi x}{K}}}\right|\leq \frac{10}{9\pi}.
\end{align}
\end{lemma}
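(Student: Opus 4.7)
The plan is to bound $|S_K(x)| = |\sin \pi x|/(K|\sin(\pi x/K)|)$ by splitting the positive-$x$ portion of the interval, $1 \leq x \leq K/2$ (the negative side follows by the evident evenness of $|S_K|$ under $x\mapsto -x$), into two pieces, $[1, 3/2]$ and $[3/2, K/2]$, and using a different strategy on each. In both cases I plan to bound the numerator trivially by $1$ and instead work to lower bound the denominator carefully.

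On the larger piece $[3/2, K/2]$, the strategy is Jordan's inequality $\sin t \geq (2/\pi) t$ on $[0, \pi/2]$, applied with $t = \pi x/K$ (valid since $x \leq K/2$). This gives $K \sin(\pi x/K) \geq 2x$, and hence $|S_K(x)| \leq 1/(2x) \leq 1/3$. The comparison $1/3 \leq 10/(9\pi)$ then reduces to $3\pi \leq 10$, which holds.

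On the smaller piece $[1, 3/2]$, the Jordan bound $1/(2x) \leq 1/2$ is too loose, so the plan is instead to exploit the monotonicity of $f(K) := K \sin(\pi x/K)$ in $K$ for fixed $x$. A short calculation gives $f'(K) = \sin u - u \cos u$ with $u = \pi x/K$, which is positive on $(0, \pi/2)$, so $f(K) \geq f(4) = 4 \sin(\pi x/4)$ for every $K \geq 4$. Because $\sin(\pi x/4)$ is itself increasing on $[1, 3/2]$ (where $\pi x/4 \in [\pi/4, 3\pi/8] \subset [0, \pi/2]$), it is minimized at $x = 1$, taking the value $\sin(\pi/4) = \sqrt{2}/2$. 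Combining,
\begin{align}
|S_K(x)| \leq \frac{1}{4 \sin(\pi x/4)} \leq \frac{1}{4\sin(\pi/4)} = \frac{1}{2\sqrt{2}}.
\end{align}

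The main obstacle I expect is closing this second case: the numerical margin is genuinely tight. The final comparison $1/(2\sqrt{2}) \leq 10/(9\pi)$ is equivalent to $81\pi^2 \leq 800$, and $81\pi^2 \approx 799.44$, so the inequality holds by roughly one part in $10^3$. This tightness is telling: it shows both that the split point $x = 3/2$ and the hypothesis $K \geq 4$ are doing real work (a global use of Jordan's inequality on $[1, K/2]$ fails near $x = 1$, and the monotonicity comparison to $K=4$ fails if $K < 4$ is allowed), and that any looser bookkeeping will not close the constant. The remainder of the proof is routine: verifying $f'(K) > 0$, confirming monotonicity of $\sin(\pi x/4)$ on $[1,3/2]$, and checking the two numerical inequalities $3\pi < 10$ and $81\pi^2 < 800$.
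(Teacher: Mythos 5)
Your proof is correct and, at its core, is the same as the paper's: both bound the numerator by $1$ and reduce the denominator to its worst case $4\sin(\pi/4)=2\sqrt{2}$ at $x=1$, $K=4$, closing with the tight numerical comparison $\frac{1}{2\sqrt{2}}\leq\frac{10}{9\pi}$ (i.e.\ $81\pi^2<800$). The only difference is that your split at $x=3/2$ and the Jordan-inequality bound on $[3/2,K/2]$ are unnecessary: since $\pi x/K\in[\pi/K,\pi/2]$ on all of $[1,K/2]$, the denominator $K\sin(\pi x/K)$ is increasing in $x$ there, so the monotonicity argument you apply on $[1,3/2]$ already covers the entire interval, which is exactly how the paper proceeds.
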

\begin{proof}
We can use the inequality $S_K(x)=\frac{\sin{\pi x}}{K\sin{(\frac{\pi x}{K})}}\leq \frac{1}{K\sin{(\frac{\pi x}{K})}}$ to write
\begin{align}
|S_K(x)|\leq\frac{1}{\left|K\sin{\left(\frac{\pi x}{K}\right)}\right|}.
\end{align}
Since $\left|\sin\left(\frac{\pi x}{K}\right)\right|$ is monotonically increasing on the interval $1 \leq x \leq \frac{K}{2}$, then we have a minima at $x = 1$. Therefore
\begin{align}
    |S_K(x)| \leq \frac{1}{\left|K\sin(\frac{\pi}{K})\right|}
\end{align}
Furthermore, when $K \geq 4$ we have 
$|S_K(x)|\leq \frac{1}{2\sqrt{2}} \leq \frac{10}{9\pi}$.
\end{proof}

\begin{corollary}
\label{cor:non_adj_bound} 
We have for $j$ satisfying $1 \leq |j-\Theta|\leq \frac{K}{2}$, $K\geq 4$,
\begin{align}
|f_j|\leq\frac{10}{9\pi}.
\end{align}
The expected magnitude of a non-adjacent frequency is less than $\frac{10}{9\pi}$. \\
\end{corollary}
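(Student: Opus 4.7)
The plan is to reduce Corollary \ref{cor:non_adj_bound} directly to the two lemmas that have just been established, namely Lemma \ref{lem:coeff_bounds} and Lemma \ref{lem:SK_lower_bound}. The structure mirrors the proof of Corollary \ref{cor:adj_bound}: once $|f_j|$ is rewritten as $|S_K(j-\Theta)|$, the stated hypothesis on $j$ places $x = j-\Theta$ in exactly the regime where Lemma \ref{lem:SK_lower_bound} applies.

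Concretely, I would first invoke Lemma \ref{lem:coeff_bounds} to write
\begin{align}
|f_j| = \left|S_K\!\left(j - \tfrac{K\theta}{2\pi}\right)\right| = |S_K(j-\Theta)|,
\end{align}
where $\Theta = K\theta/(2\pi)$. The ``non-adjacent frequency'' hypothesis $1 \leq |j-\Theta| \leq K/2$ gives $x := j-\Theta$ in the interval $[-K/2,-1]\cup [1,K/2]$. Since $K \geq 4$ by assumption, Lemma \ref{lem:SK_lower_bound} applies and yields
\begin{align}
|f_j| = |S_K(x)| \leq \frac{10}{9\pi},
\end{align}
which is exactly the claim.

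There is no real obstacle here; the work has already been done in Lemma \ref{lem:SK_lower_bound}, and the role of this corollary is simply to translate the abstract bound on $S_K$ into the language of Fourier coefficients and frequencies used in the analysis of Algorithm \ref{alg:dft_rae}. The only thing to be careful about is confirming that the range $1 \leq |j-\Theta| \leq K/2$ in the corollary matches the range $1 \leq |x| \leq K/2$ in Lemma \ref{lem:SK_lower_bound}, and that the $K \geq 4$ condition is carried over; both are immediate. This lemma together with Corollary \ref{cor:adj_bound} then supplies the $\frac{2}{\pi} - \frac{10}{9\pi} = \frac{8}{9\pi}$ gap relied upon in the proof of Lemma \ref{lem:inspec_suff}.
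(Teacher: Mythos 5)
Your proposal is correct and follows essentially the same route as the paper: invoke Lemma \ref{lem:coeff_bounds} to rewrite $|f_j|$ as $|S_K(j-\Theta)|$, then apply Lemma \ref{lem:SK_lower_bound} on the range $1\leq|j-\Theta|\leq K/2$ with $K\geq 4$. The paper's own proof adds only the minor remark that $j-\Theta$ indeed lands in the required interval because both $j$ and $\Theta$ lie between $0$ and $K$, which you also verify.
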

\begin{proof}
As shown in Lemma \ref{lem:coeff_bounds}, the magnitude of the expected value of any coefficient is 
\begin{align}
|{f}_j|= \left|{S_K(j-\Theta)}\right|,
\end{align}
where $\Theta = \frac{K\theta}{2\pi}$. Since both $j$ and $\Theta$ are bound between 0 and $\frac{K}{2}$, so is their difference $j-\Theta$. Furthermore, assuming that $j$ corresponds to a non-adjacent frequency ensures $|j-\Theta|\geq 1$.
For $K\geq 4$, we use Lemma \ref{lem:SK_lower_bound} and have
\begin{align}
|{f}_j|\leq \frac{1}{2\sqrt{2}} \leq \frac{10}{9\pi}.
\end{align}
 \end{proof}

\begin{lemma}[Likelihood of all coefficients being in spec]\label{lem:inspec_likelihood}
In Algorithm \ref{alg:dft_rae}, the likelihood of having $|\hat{f}_j-f_j|\leq \frac{4}{9\pi}$. for all $j \in \{0,\ldots,K-1\}$ is greater than $1-{4K} \textup{exp}\left(-\frac{2M}{81\pi^2} \right)$.
\end{lemma}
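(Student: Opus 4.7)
The plan is to bound $|\hat{f}_j - f_j|$ for a fixed $j$ by bounding its real and imaginary parts separately using Hoeffding's inequality, and then take a union bound over the two parts and over the $K$ possible values of $j$.

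First I would write $\hat{f}_j$ explicitly as a sample mean of $M$ i.i.d.\ complex random variables. Each contribution has the form $(c_i + i s_i) e^{-2\pi i k_i j / K}$. Expanding, the real part is $X_i^{(j)} := c_i \cos(2\pi k_i j/K) + s_i \sin(2\pi k_i j/K)$ and the imaginary part is $Y_i^{(j)} := s_i \cos(2\pi k_i j/K) - c_i \sin(2\pi k_i j/K)$. Since $c_i, s_i \in \{-1,+1\}$, the triangle inequality gives $|X_i^{(j)}|, |Y_i^{(j)}| \leq |\cos(\cdot)| + |\sin(\cdot)| \leq \sqrt{2}$, so both quantities take values in $[-\sqrt{2}, \sqrt{2}]$. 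Moreover, by definition $\mathbb{E}[X_i^{(j)}] = \mathrm{Re}(f_j)$ and $\mathbb{E}[Y_i^{(j)}] = \mathrm{Im}(f_j)$.

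Next I would apply Hoeffding's inequality to the sample means of $X_i^{(j)}$ and $Y_i^{(j)}$. With range length $2\sqrt{2}$ and deviation threshold $t = \frac{2\sqrt{2}}{9\pi}$, Hoeffding gives, for each fixed $j$,
\begin{align}
\Pr\!\left[\,|\mathrm{Re}(\hat{f}_j - f_j)| > \tfrac{2\sqrt{2}}{9\pi}\right]
\leq 2\exp\!\left(-\tfrac{2M \cdot 8/(81\pi^2)}{8}\right)
= 2\exp\!\left(-\tfrac{2M}{81\pi^2}\right),
\end{align}
and the same bound for the imaginary part. The threshold $\tfrac{2\sqrt{2}}{9\pi}$ is chosen precisely so that if both the real and imaginary deviations are within it, then $|\hat{f}_j - f_j| \leq \sqrt{2 \cdot \tfrac{8}{81\pi^2}} = \tfrac{4}{9\pi}$, i.e.\ coefficient $j$ is in spec.

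Finally I would combine via two union bounds: the first, over the real and imaginary parts, shows that a fixed $j$ is out of spec with probability at most $4\exp(-2M/(81\pi^2))$; the second, over $j \in \{0,\ldots,K-1\}$, shows that the probability that \emph{some} coefficient is out of spec is at most $4K\exp(-2M/(81\pi^2))$. The complementary event yields the stated bound. There is no real obstacle here: the only choices being made are (i) splitting the complex deviation into real and imaginary parts and (ii) selecting the threshold $\tfrac{2\sqrt{2}}{9\pi}$ so that the constants line up with the $\tfrac{4}{9\pi}$ spec needed by Lemma~\ref{lem:inspec_suff}. The slight subtlety worth verifying is that $X_i^{(j)}$ and $Y_i^{(j)}$ are genuinely i.i.d.\ across $i$ (they are, since $k_i$ is drawn independently and $c_i, s_i$ are independent measurement outcomes conditioned on $k_i$), which legitimizes the use of Hoeffding.
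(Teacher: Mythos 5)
Your proposal is correct and follows essentially the same route as the paper's proof: decompose the deviation into real and imaginary parts, apply Hoeffding with range $2\sqrt{2}$ and threshold $\frac{2\sqrt{2}}{9\pi}$ to each, and combine with union bounds over the two components and over the $K$ frequencies. The only difference is cosmetic --- you justify the $[-\sqrt{2},\sqrt{2}]$ bound explicitly via $|\cos\phi|+|\sin\phi|\leq\sqrt{2}$, which the paper merely asserts.
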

\begin{proof}
We use a union bound and a concentration inequality to bound the probability that \emph{all} Fourier coefficients are in spec.

\emph{Union bound:}
The algorithm failure probability is less than the probability that at least one Fourier coefficient is out of spec.
Let $q$ upper bound the probability that the worst-case Fourier coefficient fails to be in spec.
By the union bound, the likelihood that at least one coefficient fails to be in spec is upper bounded as
\begin{align}
\textup{Pr}(\textup{any out of spec}) \leq \sum_l \textup{Pr}(l\textup{th coefficient out of spec}) \leq K q.
\end{align}
Next, we upper-bound $q$.

\emph{Hoeffding bound:}

During Algorithm \ref{alg:dft_rae}, each of the parity samples gives an i.i.d. estimate 
\begin{align*}
    \hat{f}^{(i)}_j&=(c_i + is_i)\exp{\frac{2\pi i k_ij}{K}}\\
    &=(c_i + is_i)\left(\cos\left(\frac{2\pi k_ij}{K}\right) + i\sin\left(\frac{2\pi k_ij}{K}\right)\right),
\end{align*} which are then averaged to obtain $\hat{f}_j=\frac{1}{M}\sum_{i=1}^M \hat{f}^{(i)}_j$.
Thus, we can use a concentration inequality to upper bound the probability that this quantity deviates from its expected value.
The real and imaginary components of a single sample estimate lie within $[-\sqrt{2}, \sqrt{2}]$.
The Hoeffding inequality bounds the likelihood that the sum of $M$ such estimates deviates by more than amount $t$ 
\begin{align}
    \textup{Pr}\left(\left|\sum_{i=1}^M\textup{Re}(\hat{f}_j^{(i)})-\mathbb{E}\sum_{i=1}^M\textup{Re}(\hat{f}_j^{(i)})\right|\geq t\right)\leq 2\textup{exp} \left(-\frac{t^2}{4M} \right),
\end{align}
where we have used that the random variables are bound by an interval of length $2\sqrt{2}$.
Setting $t=\frac{2\sqrt{2}M}{9\pi}$, we can re-express the above statement as
\begin{align}
    \textup{Pr}\left(\left|\frac{1}{M}\sum_{i=1}^M\textup{Re}(\hat{f}_j^{(i)})-\mathbb{E}\textup{Re}(\hat{f}_j)\right|\geq \frac{2\sqrt{2}}{9\pi} \right)\leq 2\textup{exp} \left(-\frac{M}{81\pi^2} \right).
\end{align}

The same relationship holds for the imaginary components.
For any Fourier coefficient, the estimate will be in spec (i.e. distance $\frac{4}{9\pi}$ in the complex plane) if both the real and imaginary components are within $\frac{2\sqrt{2}}{9\pi}$.
Using the union bound, the likelihood that at least one of the components (real or imaginary) deviate by more than $\frac{2\sqrt{2}}{9\pi}$ is less than twice the likelihood of any one deviating by more than this amount.
Putting this together, we can upper bound the likelihood that a given coefficient is out of spec,
\begin{align} 
    &\textup{Pr}\left(j\textup{th coefficient out of spec} \right)=\\
    &\textup{Pr}\left(\left|\hat{f}_j-\mathbb{E}\hat{f}_j\right|\geq \frac{4}{9\pi} \right)\leq  4\textup{exp} \left(-\frac{2M}{81\pi^{2}} \right).
\end{align}
Given that this bound is independent of $j$, we have that $q$, the maximum failure probability over $j$, is bounded as $q\leq 4\textup{exp} \left(-\frac{2M}{81\pi^{2}} \right)$.
Finally, the likelihood that at least one coefficient estimate is out of spec is bounded as
\begin{align}
\textup{Pr}(\textup{any out of spec}) \leq Kq \leq 4K\textup{exp} \left(-\frac{2M}{81\pi^2
} \right).
\end{align}
\end{proof}

\begin{lemma}
\label{lem:BAN_inspec_likelihood}
Subject to bounded adversarial noise with bound $\bar{\eta}< \frac{2\sqrt{2}}{9\pi}$, in Algorithm \ref{alg:dft_rae}, the likelihood of having $|\hat{f}_j-f_j|\leq \frac{4}{9\pi}$ for all $j \in \{0,\ldots,K-1\}$ is greater than $1- 4K\textup{exp} \left(-\frac{2M}{81\pi^2}\left(1 - \frac{9\pi}{2\sqrt{2}}\bar{\eta}\right)^2 \right)$.
\end{lemma}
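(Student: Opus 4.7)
The plan is to adapt the proof of Lemma \ref{lem:inspec_likelihood} by accounting for the adversarial perturbation $\eta_j$ that shifts the expected Fourier coefficient away from the noiseless value $f_j$, and then invoking Hoeffding's inequality with a correspondingly tightened deviation budget.

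First, I will control the bias introduced by the noise. Under the BAN model, $\eta_{1,k}$ and $\eta_{2,k}$ are fixed values with $|\eta_{1,k}|,|\eta_{2,k}|\leq \bar\eta$, so
\begin{equation}
\mathbb{E}\hat{f}_j - f_j \;=\; \eta_j \;=\; \frac{1}{K}\sum_{k=0}^{K-1}(\eta_{1,k}+i\eta_{2,k})e^{-2\pi i jk/K}.
\end{equation}
Since $|\eta_{1,k}+i\eta_{2,k}|\leq\sqrt{2}\,\bar\eta$ pointwise, the triangle inequality yields $|\eta_j|\leq\sqrt{2}\,\bar\eta$. Thus by the triangle inequality,
\begin{equation}
|\hat{f}_j - f_j| \;\leq\; |\hat{f}_j - \mathbb{E}\hat{f}_j| \;+\; \sqrt{2}\,\bar\eta,
\end{equation}
so to get $|\hat{f}_j-f_j|\leq \tfrac{4}{9\pi}$ it is enough to guarantee $|\hat{f}_j-\mathbb{E}\hat{f}_j|\leq \tfrac{4}{9\pi}-\sqrt{2}\,\bar\eta$, which is positive precisely under the hypothesis $\bar\eta<\tfrac{2\sqrt{2}}{9\pi}$.

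Next, I will mirror the Hoeffding argument from Lemma \ref{lem:inspec_likelihood}. Writing $\hat f_j = \tfrac{1}{M}\sum_{i=1}^M \hat f_j^{(i)}$ with i.i.d. summands whose real and imaginary parts each lie in $[-\sqrt 2,\sqrt 2]$, it suffices (via a $\sqrt 2$ factor) to force both $|\mathrm{Re}(\hat f_j)-\mathrm{Re}(\mathbb{E}\hat f_j)|$ and $|\mathrm{Im}(\hat f_j)-\mathrm{Im}(\mathbb{E}\hat f_j)|$ to be at most $\tfrac{1}{\sqrt2}\bigl(\tfrac{4}{9\pi}-\sqrt2\,\bar\eta\bigr)=\tfrac{2\sqrt2}{9\pi}-\bar\eta$. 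Applying Hoeffding's inequality with this tightened threshold $t=\tfrac{2\sqrt2}{9\pi}-\bar\eta=\tfrac{2\sqrt2}{9\pi}\bigl(1-\tfrac{9\pi}{2\sqrt2}\bar\eta\bigr)$ gives, for each of the real and imaginary components,
\begin{equation}
\Pr\!\left(\bigl|\tfrac1M\textstyle\sum_i\mathrm{Re}(\hat f_j^{(i)}) - \mathrm{Re}(\mathbb{E}\hat f_j)\bigr|\geq t\right)\;\leq\;2\exp\!\left(-\tfrac{2M}{81\pi^2}\bigl(1-\tfrac{9\pi}{2\sqrt2}\bar\eta\bigr)^2\right),
\end{equation}
which reduces to the bound of Lemma \ref{lem:inspec_likelihood} when $\bar\eta=0$.

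Finally, I will conclude by two successive union bounds: one combining the real and imaginary failure modes of a single $j$ (yielding the prefactor $4$), and a second over the $K$ indices $j\in\{0,\ldots,K-1\}$ (yielding the prefactor $4K$). This delivers exactly the probability bound stated in the lemma. The only nontrivial step is the bound $|\eta_j|\leq \sqrt{2}\,\bar\eta$; everything else is a direct repackaging of the noiseless proof with the deviation budget $\tfrac{2\sqrt2}{9\pi}$ shrunk by $\bar\eta$. I do not anticipate any substantive obstacle beyond tracking these constants carefully.
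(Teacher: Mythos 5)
Your proposal is correct and follows essentially the same route as the paper's proof: bound the deterministic bias by $|\eta_j|\leq\sqrt{2}\,\bar\eta$, use the triangle inequality to shrink the deviation budget to $\frac{4}{9\pi}-\sqrt{2}\,\bar\eta$, apply Hoeffding componentwise with the tightened threshold, and finish with union bounds over the real/imaginary parts and over the $K$ frequencies. The constants all check out, including the identity $\frac{1}{\sqrt{2}}\bigl(\frac{4}{9\pi}-\sqrt{2}\,\bar\eta\bigr)=\frac{2\sqrt{2}}{9\pi}\bigl(1-\frac{9\pi}{2\sqrt{2}}\bar\eta\bigr)$ that produces the stated exponent.
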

\begin{proof}
We begin by upper bounding the likelihood that any estimated Fourier coefficient is out of spec.
When including the BAN model, the violation of in-spec is due to a combination of statistical error and BAN error. The algorithm generates noisy estimates of the Fourier coefficient. We use the minimum distance between these noisy estimates and the actual Fourier coefficients to get an upper bound on the probability that  any estimated Fourier coefficient is out of spec, we use the triangular inequality for this purpose. Note that, $\eta_j$ is the deviation in the $j$th iteration of the algorithm and the maximum value of $|\eta_j|$ is $\sqrt{2}\bar{\eta}$, since both $\eta_{1,k}$ and $\eta_{2,k}$ have maximum value $\bar\eta$ and hence the complex vector has maximum length $\sqrt{2}\bar\eta$. 

\begin{align}
    \textup{Pr}\left(\left|\hat{f}_j-f_j\right|\geq \frac{4}{9\pi}\right)&\leq\textup{Pr}\left(\left|\hat{f}_j-f_j-\eta_j\right|\geq \frac{4}{9\pi}-\sqrt{2}\bar{\eta}\right)\\
    &\leq \textup{Pr}\left(\left|\hat{f}_j-\mathbb{E}\hat{f}_j\right|\geq \frac{4}{9\pi} - \sqrt{2}\bar{\eta}\right)
\end{align}

We use a union bound and a concentration inequality to bound the probability that \emph{all} Fourier coefficients are in spec.

\emph{Union bound:}
The algorithm failure probability is less than the probability that at least one Fourier coefficient is out of spec.
Let $q$ upper bound the probability that the worst-case Fourier coefficient fails to be in spec.
By the union bound, the likelihood that at least one coefficient fails to be in spec is upper bounded as
\begin{align}
&\textup{Pr}(\textup{any out of spec, noisy}) \leq\\
&\sum_l \textup{Pr}(l\textup{th coefficient out of spec, noisy}) \leq K q.
\end{align}
Next, we upper-bound $q$.

\emph{Hoeffding bound:}
During Algorithm \ref{alg:dft_rae}, each of the parity samples gives an i.i.d. estimate 
\begin{align*}
  \hat{f}^{(i)}_j&=(c_i + is_i)\exp{\frac{2\pi i k_ij}{K}}\\
  &=(c_i + is_i)\left(\cos\left(\frac{2\pi k_ij}{K}\right) + i\sin\left(\frac{2\pi k_ij}{K}\right)\right),  
\end{align*}
which are then averaged to obtain $\hat{f}_j=\frac{1}{M}\sum_{i=1}^M \hat{f}^{(i)}_j.$
Thus, we can use a concentration inequality to upper bound the probability that this quantity deviates from its expected value.
The real and imaginary components of a single sample estimate lie within $[-\sqrt{2}, \sqrt{2}]$.
The Hoeffding inequality bounds the likelihood that the sum of $M$ such estimates deviates by more than amount $t$
\begin{align}
    \textup{Pr}\left(\left|\sum_{i=1}^M\textup{Re}(\hat{f}_j^{(i)})-\mathbb{E}\sum_{i=1}^M\textup{Re}(\hat{f}_j^{(i)})\right|\geq t\right)\leq 2\textup{exp} \left(-\frac{t^2}{4M} \right),
\end{align}
where we have used that the random variables are bound by an interval of length $2\sqrt{2}$. 
The choice of $t$ depends on $\bar{\eta}$. We choose $t=\frac{M}{\sqrt{2}}\left( \frac{4}{9\pi} - \sqrt{2}\bar{\eta}\right)$ 
according to
our definition of "in-spec" and we can re-express the above statement as
\begin{align}
    &\textup{Pr}\left(\left|\frac{1}{M}\sum_{i=1}^M\textup{Re}(\hat{f}_j^{(i)})-\mathbb{E}\textup{Re}(\hat{f}_j)\right|\geq \left(\frac{4}{9\pi} - \sqrt{2}\bar{\eta}\right)\frac{1}{\sqrt{2}} \right)\\
    &\leq 2\textup{exp} \left(-\frac{2M\left(1 - \frac{9\pi}{2\sqrt{2}}\bar{\eta}\right)^2}{81\pi^2} \right).
\end{align}
The same relationship holds for the imaginary components.
For any Fourier coefficient, the estimate will be in spec (i.e. distance $\frac{4}{9\pi}$ in the complex plane) if both the real and imaginary components are within $\frac{1}{\sqrt{2}}\left(\frac{4}{9\pi} - \sqrt{2}\bar{\eta}\right)$.
Using the union bound, the likelihood that at least one of the components (real or imaginary) deviate by more than $\frac{1}{\sqrt{2}}\left( \frac{4}{9\pi} - \sqrt{2}\bar{\eta}\right)$ is less than twice the likelihood of any one deviating by more than this amount. 

Putting this together, we can upper bound the likelihood that a given coefficient is out of spec,
\begin{align} 
    \textup{Pr}\left(\left|\hat{f}_j-\mathbb{E}\hat{f}_j\right|\geq \frac{4}{9\pi} - \sqrt{2}\bar{\eta}\right)\leq  4\textup{exp}\left(-\frac{2M\left(1 - \frac{9\pi}{2\sqrt{2}}\bar{\eta}\right)^2}{81\pi^2} \right).
\end{align}

The upper bound on the probability that the $j$th estimated Fourier coefficient is out-of-spec depends on $\bar{\eta}$ as
 
\begin{align}
  &\textup{Pr}(j\textup{th coefficient out of spec, noisy})=\\ &\textup{Pr}\left(\left|\hat{f}_j-f_j\right|\geq \frac{4}{9\pi}\right) \\
 & \leq 4\textup{exp} \left(-\frac{2M\left(1 - \frac{9\pi}{2\sqrt{2}}\bar{\eta}\right)^2}{81\pi^2} \right).
\end{align}
Given that this bound is independent of $j$, we have that $q$, the maximum failure probability over $j$, is bounded as $q\leq 4\textup{exp}\left(-\frac{2M\left(1 - \frac{9\pi}{2\sqrt{2}}\bar{\eta}\right)^2}{81\pi^2} \right)$.
Finally, the likelihood that at least one coefficient estimate is out of spec is bounded as
\begin{align}
&\textup{Pr}(\textup{any out of spec, noisy}) \leq Kq \\
&\leq 4K\textup{exp} \left(-\frac{2M\left(1 - \frac{9\pi}{2\sqrt{2}}\bar{\eta}\right)^2}{81\pi^2} \right). 
\end{align}
\end{proof}

\begin{lemma}
\label{lem:complex_variance}
Let $\hat{z}$ be a random complex variable. The complex variance, $\textup{Var}(\hat{z})=\mathbb{E}(|\hat{z}|^2)-|\mathbb{E}(\hat{z})|^2$, allows us to establish the following concentration inequality
\begin{align}
\textup{Pr}\left(|\hat{z}-\mathbb{E}(\hat{z})|\geq t\sqrt{\textup{Var}(\hat{z})}\right) \leq 4\exp{-\frac{t^2}{8}}.
\end{align}
\end{lemma}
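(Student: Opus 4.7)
The plan is to reduce the complex-valued concentration inequality to a pair of real-valued tail bounds on the real and imaginary parts of $\hat{z}-\mathbb{E}(\hat{z})$, and then combine them with a union bound. First, I would write $\hat{z} = X + iY$ with $X=\textup{Re}(\hat{z})$ and $Y=\textup{Im}(\hat{z})$. An elementary expansion of $|\hat{z}|^2 = X^2 + Y^2$ yields $\textup{Var}(\hat{z}) = \textup{Var}(X) + \textup{Var}(Y)$, so in particular $\textup{Var}(X) \leq \textup{Var}(\hat{z})$ and likewise for $Y$.

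Next, since $|\hat{z}-\mathbb{E}(\hat{z})|^2 = (X-\mathbb{E}(X))^2 + (Y-\mathbb{E}(Y))^2$, whenever the left-hand side exceeds $t^2\,\textup{Var}(\hat{z})$ at least one of the two squared real deviations must be at least $\tfrac{t^2}{2}\textup{Var}(\hat{z})$. A union bound therefore gives
\begin{align*}
\textup{Pr}\left(|\hat{z}-\mathbb{E}(\hat{z})| \geq t\sqrt{\textup{Var}(\hat{z})}\right) \leq \textup{Pr}\left(|X-\mathbb{E}(X)| \geq \tfrac{t}{\sqrt{2}}\sqrt{\textup{Var}(\hat{z})}\right) + \textup{Pr}\left(|Y-\mathbb{E}(Y)| \geq \tfrac{t}{\sqrt{2}}\sqrt{\textup{Var}(\hat{z})}\right),
\end{align*}
reducing the problem to two real-valued tail probabilities whose deviation scale is proportional to $\sqrt{\textup{Var}(\hat{z})}$.

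To close the argument, I would apply Hoeffding's inequality to each of the two real-valued sums, leveraging the fact that in every setting where this lemma is invoked $\hat{z}$ is a sample mean of $M$ i.i.d.\ bounded per-shot complex estimates of the form $(c_i+is_i)e^{-2\pi i k_i j / K}$, whose real and imaginary parts lie in $[-\sqrt{2},\sqrt{2}]$. Using the relation between $\textup{Var}(\hat{z})$, $M$, and the per-shot range $2\sqrt{2}$, the Hoeffding exponent can be rewritten in terms of the normalized deviation $t$, and the factor-of-two loss from the pigeonhole step together with the two-term union bound yields the claimed $4 e^{-t^2/8}$.

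The main obstacle is that the lemma as stated, for an arbitrary complex random variable with finite variance, cannot in general admit a sub-Gaussian tail, since purely Chebyshev-type reasoning only yields $1/t^2$ decay. So the proof must implicitly rely on the bounded-sample structure of $\hat{z}$ coming from the Hadamard-test estimator, and on careful bookkeeping of constants at the Hoeffding step. The delicate part is arranging the Hoeffding exponents on $X$ and $Y$ so that, after the factor-of-$\sqrt{2}$ contraction from pigeonhole and the factor-of-$2$ loss from the union bound, the final exponent lands on $t^2/8$ rather than a weaker value.
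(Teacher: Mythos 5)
Your reduction to real and imaginary parts --- the identity $\textup{Var}(\hat{z})=\textup{Var}(X)+\textup{Var}(Y)$, the pigeonhole step at threshold $\frac{t}{\sqrt{2}}\sqrt{\textup{Var}(\hat{z})}$, and the two-term union bound --- is exactly the paper's argument, and your observation that the statement cannot hold for an arbitrary complex random variable with finite variance (Chebyshev alone only gives $1/t^2$ decay) is correct and identifies a genuine imprecision in how the lemma is phrased. The gap is in how you propose to close the argument. This lemma is \emph{not} invoked on the Hadamard-test sample mean $\hat{f}_j=\frac{1}{M}\sum_i(c_i+is_i)e^{-2\pi i k_i j/K}$; that case is handled by a direct Hoeffding argument inside Lemmas \ref{lem:inspec_likelihood} and \ref{lem:BAN_inspec_likelihood} and never passes through Lemma \ref{lem:complex_variance}. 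The only place this lemma is used is Lemma \ref{lem:gaussian_failure}, where $\hat{z}=\hat{\eta}_j=\frac{1}{K}\sum_k(\eta_{1,k}+i\eta_{2,k})e^{-2\pi i jk/K}$ is a fixed linear functional of the i.i.d.\ Gaussian noise draws: there is no $M$, no bounded per-shot sample, and Hoeffding is not the relevant tool. What makes the sub-Gaussian tail true there is that $\textup{Re}(\hat{\eta}_j)$ and $\textup{Im}(\hat{\eta}_j)$ are themselves Gaussian, so the standard tail bound $\textup{Pr}(|W-\mathbb{E}W|\geq u\sqrt{\textup{Var}(W)})\leq 2e^{-u^2/2}$ applies with $u=t/\sqrt{2}$, giving $2e^{-t^2/4}\leq 2e^{-t^2/8}$ per component. (This is also the step the paper's own proof leaves implicit.)

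Even taken on its own terms, your Hoeffding route has a second hole you would need to patch: Hoeffding controls deviations relative to the \emph{range} of the summands, not relative to $\sqrt{\textup{Var}(\hat{z})}$. Plugging the deviation $\frac{t}{\sqrt{2}}\sqrt{\textup{Var}(\hat{z})}$ into the Hoeffding exponent for components in $[-\sqrt{2},\sqrt{2}]$ yields $-t^2 M\textup{Var}(\hat{z})/8$, which is only $\leq -t^2/8$ if the single-shot variance $M\textup{Var}(\hat{z})$ is at least $1$. That happens to hold for the Hadamard estimator because $|c+is|^2=2$ forces the single-shot variance to equal $2-|f_j|^2\geq 1$, but this variance lower bound is an extra lemma your sketch does not state, and without it the ``careful bookkeeping of constants'' you allude to cannot land on $t^2/8$.
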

\begin{proof}
This inequality is established through the following chain of inequalities:
{\footnotesize \begin{align}
    &\textup{Pr}\left(|\hat{z}-\mathbb{E}(\hat{z})|\geq t\sqrt{\textup{Var}(\hat{z})}\right)  = \textup{Pr}\left(|\hat{z}-\mathbb{E}(\hat{z})|^2\geq t^2\textup{Var}(\hat{z})\right) \label{eq:complex_variance_setup}\\
    \leq & \textup{Pr}\left(|\textup{Re}(\hat{z}-\mathbb{E}(\hat{z}))|\geq \frac{1}{\sqrt{2}}t\sqrt{\textup{Var}(\hat{z})} \textup{ or } |\textup{Im}(\hat{z}-\mathbb{E}(\hat{z}))|\geq \frac{1}{\sqrt{2}}t\sqrt{\textup{Var}(\hat{z})}  \right) \label{eq:complex_variance_complexreal}\\
    \leq & \textup{Pr}\left(|\textup{Re}(\hat{z}-\mathbb{E}(\hat{z}))|\geq \frac{1}{\sqrt{2}}t\sqrt{\textup{Var}(\hat{z})}\right)\nonumber\\
    &+\textup{Pr}\left( |\textup{Im}(\hat{z}-\mathbb{E}(\hat{z}))|\geq \frac{1}{\sqrt{2}}t\sqrt{\textup{Var}(\hat{z})}  \right) \label{eq:complex_variance_unionbound}\\
    \leq & \textup{Pr}\left(|\textup{Re}(\hat{z}-\mathbb{E}(\hat{z}))|
    \geq \frac{1}{\sqrt{2}}t\sqrt{\textup{Var}(\textup{Re}(\hat{z}))}\right)\nonumber\\
    &+\textup{Pr}\left( |\textup{Im}(\hat{z}-\mathbb{E}(\hat{z}))|\geq \frac{1}{\sqrt{2}}t\sqrt{\textup{Var}(\textup{Im}(\hat{z}))}  \right) \label{eq:complex_variance_complexreal_2}\\ 
    \leq & 2\exp{-\frac{t^2}{8}} +  2\exp{-\frac{t^2}{8}} \\
    \leq & 4\exp{-\frac{t^2}{8}}.
\end{align}}
Here we establish an upper bound on the probability of a lower bound on the variance of a complex random variable. This would be useful in the next lemma, Lemma \ref{lem:gaussian_failure}. We start with equation \eqref{eq:complex_variance_setup} as a set up step, equation \eqref{eq:complex_variance_complexreal} is obtained using the definition of absolute value of a complex variable. Equation \eqref{eq:complex_variance_unionbound} is obtained by using the union bound and equation \eqref{eq:complex_variance_complexreal_2} again uses the definition of absolute value of a complex variable. 
\end{proof}

\begin{lemma}
\label{lem:gaussian_failure}
Under the Gaussian noise model with standard deviation $\sigma$ and $\mu=0$, the probability that there is at least one $j$ for which $|\hat{\eta_j}|\geq\bar{\eta}$ is less than $4K\exp{-\frac{\sigma^2}{4K\bar{\eta}^2}}$.
\end{lemma}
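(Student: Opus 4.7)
The plan is to exploit the fact that under the Gaussian noise model each noisy Fourier coefficient
\begin{align}
\hat{\eta}_j = \frac{1}{K}\sum_{k=0}^{K-1}(\eta_{1,k}+i\eta_{2,k})e^{-2\pi i jk/K}
\end{align}
is a linear combination of independent mean-zero Gaussians, and hence itself concentrates tightly around zero. The argument splits into three standard pieces: first, compute the variance of $\hat{\eta}_j$; second, apply Lemma~\ref{lem:complex_variance} to turn that variance into a tail bound for a single index $j$; third, take a union bound over $j\in\{0,\ldots,K-1\}$ to pick up the advertised factor of $K$ in the prefactor.

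For the first step, I would use independence of $\{\eta_{1,k},\eta_{2,k}\}_{k}$ together with $|e^{-2\pi i jk/K}|=1$ to obtain $\mathbb{E}\hat{\eta}_j=0$ and
\begin{align}
\textup{Var}(\hat{\eta}_j) = \mathbb{E}|\hat{\eta}_j|^2 = \frac{1}{K^2}\sum_{k=0}^{K-1}\mathbb{E}|\eta_{1,k}+i\eta_{2,k}|^2 = \frac{2\sigma^2}{K}.
\end{align}
For the second step, I would apply Lemma~\ref{lem:complex_variance} with $\hat{z}=\hat{\eta}_j$, choosing the deviation parameter so that $t\sqrt{\textup{Var}(\hat{\eta}_j)}=\bar{\eta}$, i.e.\ $t^2=K\bar{\eta}^2/(2\sigma^2)$. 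This yields a per-index tail bound of the shape $\textup{Pr}(|\hat{\eta}_j|\geq\bar{\eta}) \leq 4\exp(-t^2/8)$, i.e.\ exponential decay in $K\bar{\eta}^2/\sigma^2$. The third step, a union bound across the $K$ indices, contributes the factor of $K$ and delivers a bound of the form $4K\exp(-c\,K\bar{\eta}^2/\sigma^2)$ for an absolute constant $c$.

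The main obstacle I anticipate is purely bookkeeping: the numerical constants must be tracked carefully so the resulting expression lines up exactly with how this lemma is subsequently used in Corollary~\ref{thm:GN_alg_success}, where $\bar{\eta}$ is chosen so that the failure probability above contributes at most $\delta/2$ to the overall Gaussian-noise failure budget. In particular, the passage from a bound on the complex magnitude $|\hat{\eta}_j|$ to simultaneous bounds on its real and imaginary Gaussian components is exactly the content of Lemma~\ref{lem:complex_variance} and introduces $\sqrt{2}$-style factors whose placement must be checked against the downstream choice $\bar{\eta}^2 \propto \sigma^2\ln(8K/\delta)/K$. Beyond this arithmetic, no deeper technical machinery is required: the proof is essentially just a Gaussian tail bound followed by a union bound.
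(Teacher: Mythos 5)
Your plan follows the same route as the paper's own proof: compute $\textup{Var}(\hat{\eta}_j)=2\sigma^2/K$ using independence and zero mean, invoke Lemma~\ref{lem:complex_variance}, and union bound over the $K$ frequencies. The one place you diverge is the decisive one. You choose $t$ so that $t\sqrt{\textup{Var}(\hat{\eta}_j)}=\bar{\eta}$, i.e.\ $t^2=K\bar{\eta}^2/(2\sigma^2)$, which is the correct substitution for the event $|\hat{\eta}_j|\geq\bar{\eta}$ and yields
\begin{align}
\textup{Pr}\left(\textup{any } |\hat{\eta}_j|\geq\bar{\eta}\right)\leq 4K\exp{-\frac{K\bar{\eta}^2}{16\sigma^2}}.
\end{align}
The paper instead sets $t^2=\textup{Var}(\hat{\eta}_j)/\bar{\eta}^2$, so that the threshold in Lemma~\ref{lem:complex_variance} becomes $t\sqrt{\textup{Var}(\hat{\eta}_j)}=\textup{Var}(\hat{\eta}_j)/\bar{\eta}=2\sigma^2/(K\bar{\eta})$ rather than $\bar{\eta}$; this is what produces the stated bound $4K\exp{-\frac{\sigma^2}{4K\bar{\eta}^2}}$. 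That stated bound cannot be correct for the event in the lemma: it tends to $4K$ as $\sigma\to 0$ (where the failure probability should vanish) and to $0$ as $\sigma\to\infty$ (where failure becomes certain). In short, your derivation is the sound one, and the exponent in the lemma as printed is the reciprocal of what it should be.

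The bookkeeping you flagged as the main risk is exactly where the downstream inconsistency lives. Your bound forces $\bar{\eta}^2=\frac{16\sigma^2}{K}\ln\left(\frac{8K}{\delta}\right)$ to make this failure contribution equal to $\delta/2$, whereas the paper's subsequent choice $\bar{\eta}^2=\frac{\sigma^2}{4K}\ln\left(\frac{8K}{\delta}\right)$ is consistent with the correct functional form $4K\exp{-cK\bar{\eta}^2/\sigma^2}$ but with $c=4$ rather than the $c=1/16$ that Lemma~\ref{lem:complex_variance} actually delivers; the constants in Corollary~\ref{thm:GN_alg_success} would therefore need to be re-derived from whichever bound you establish. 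One last point worth a sentence in your write-up: the final step of Lemma~\ref{lem:complex_variance} implicitly assumes a Gaussian-type tail for $\textup{Re}(\hat{\eta}_j)$ and $\textup{Im}(\hat{\eta}_j)$; this is legitimate here because each is a linear combination of independent mean-zero Gaussians and hence exactly Gaussian, but it should be stated rather than left to the generic lemma.
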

\begin{proof}
Using Lemma \ref{lem:complex_variance}, we have $\textup{Pr}(|\hat{\eta_j} - \mathbb{E}_k(\hat{\eta_j})| \geq t\sqrt{\textup{Var}(\hat{\eta_j})}) 
\leq 4\exp{-\frac{t^2}{8}}$.
A union bound over all $j$ gives $\textup{Pr}(\textup{any } |\hat{\eta_j}| \geq t\sqrt{\textup{Var}(\hat{\eta_j})}) \leq 4K\exp{-\frac{t^2}{8}}$.
Setting $t^2=\textup{Var}(\hat{\eta_j})/\bar{\eta}^2$, we find that the failure probability can be upper bounded by evaluating $\textup{Var}(\hat{\eta_j})$. 
\begin{align}
    \textup{Var}(\hat{\eta}_j)&=\mathbb{E}|\hat{\eta}_j|^2\\
    &=\mathbb{E}|\hat{\eta}_{1,j}+i\hat{\eta}_{2,j}|^2\\
    &=2\mathbb{E}|\hat{\eta}_{1,j}|^2\\    
    &=\frac{2}{K^2}\sum_{k=0}^{K-1}\sum_{k'=0}^{K-1}\omega^{j(k-k')}\mathbb{E}\hat{\eta}_{1,k}\hat{\eta}_{1,k'}\\
    &=\frac{2}{K^2}\sum_{k=0}^{K-1}\sigma^2\\     
    &=2\sigma^2/K,
\end{align}
The above equations are obtained using the definition of a complex variable, properties of complex variables, their probabilities and properties of the expectation function. 
This gives
$\textup{Pr}(\textup{any } |\hat{\eta_j}| \geq t\sqrt{\textup{Var}(\hat{\eta_j})}) \leq 4K\exp{-\frac{\sigma^2}{4K\bar{\eta}^2}}$.
\end{proof}
\noindent Note that, in the above calculation, if $\sigma$ is increasing with $k$, which might be a more accurate model of error accumulation in a quantum circuit, we can incorporate this into the calculation of the variance. For example, if $\sigma_k=k\sigma$ for some $\sigma$, then we can replace the final expression with $2K(K-1)(2 K-1)/6\sigma^2/K^2\leq 2K\sigma^2 /3$.\\

\section{Motivation for choice of noise models and relevance to the physical setting}
\label{appendix:motivation_BAN_model}
In this section, we look at a generalized noise model that we use to motivate the BAN model. Svore, Hastings and Freeman (\cite{svore2013faster}) presented the idea of information theoretic phase estimation. They also used ideas from classical signal processing to make their algorithm more efficient.
They consider the circuit in Figure \ref{fig:Faster_phase_estimation_test} for basic measurements.

\begin{figure}[h!]
\centering
\begin{quantikz}
\lstick{$\ket{0}$} &  \gate[wires=1][1cm]{H}  & \gate[wires=1][1cm]{Z(\theta)} &
\ctrl{1}&\gate[wires=1][1cm]{H} & \meter{} & 
\\
\lstick{$\ket{\phi}$} & \qw \qwbundle{}  & \qw & \gate{U^M}  & \qw & &
\end{quantikz}
\caption{(Figure 1 in \cite{svore2013faster})The circuit is used as a measurement operator.} 
\label{fig:Faster_phase_estimation_test}
\end{figure}

\subsection{Preliminary remarks and notation}

For this more general physical setting we shall assume that the initial state is prepared imperfectly and that it's a mixture of pure states i.e,
\begin{equation}
\displaystyle    \rho = \rho^0 \otimes \sum_i a_i \rho_i,
\end{equation}
where $\rho^0$ is the ancilla  state and $\sum_i \rho_i$ is the initial mixed state. One of the terms will be the eigenstate which we denote as $\rho_{\phi}$, where $\phi$ is the corresponding phase. We of course have the restriction that
\begin{equation}
    \sum_i a_i = 1.
\end{equation}

Let us work under the condition that there are errors in the unitary gate, $U$, but crucially we assume the controlled version does not introduce any new noise in the circuit. Then we model the noisy c-U gate as
$$\tilde{U} = \epsilon \circ U,$$
where $\epsilon(\rho) = \sum_\ell \beta_\ell E_\ell \rho E_\ell^\dagger$. Let $\rho^0 = \begin{pmatrix}
1 & 0 \\
0 & 0 \\
\end{pmatrix}$, $\rho^1 = \begin{pmatrix}
0 & 0 \\
0 & 1 \\
\end{pmatrix}$, $\rho^{01} = \begin{pmatrix}
0 & 1 \\
0 & 0 \\
\end{pmatrix}$, $\rho^{10} = \begin{pmatrix}
0 & 0 \\
1 & 0 \\
\end{pmatrix}$, $\rho^{+} = \frac{1}{2}\begin{pmatrix}
1 & 1 \\
1 & 1 \\
\end{pmatrix}$, $\rho^{-} = \frac{1}{2}\begin{pmatrix}
1 & -1 \\
-1 & 1 \\
\end{pmatrix}$, $\rho^{01(H)} = \frac{1}{2}\begin{pmatrix}
1 & -1 \\
1 & -1 \\
\end{pmatrix}$, $\rho^{10(H)} = \frac{1}{2}\begin{pmatrix}
1 & 1 \\
-1 & -1 \\
\end{pmatrix}$. 

\subsection{Calculation for ancilla output} 
Consider the density matrix of the initial state of the noisy input state: 
\begin{equation}
\displaystyle    \rho = \rho^0 \otimes \sum_i a_i \rho_i.
\end{equation}
After applying the first Hadamard gate on the ancilla qubit, the density matrix changes as follows:
\begin{equation}
\displaystyle    \rho = \rho^+ \otimes \sum_i a_i \rho_i.
\end{equation}
After applying the $Z(\theta)$ rotation gate on the ancilla qubit, the density matrix changes as follows:
\begin{equation}
\displaystyle    \rho = \frac{1}{2}[\rho^0 + \rho^1 + \rho^{01}e^{-i\theta} + \rho^{10}e^{i\theta}] \otimes \sum_i a_i \rho_i.
\end{equation}
After applying the controlled rotation gate, $U^M$ with the ancilla qubit as control qubit, the density matrix changes as follows:
\begin{equation}
\begin{split}
\displaystyle    \rho &= [\rho^0 \otimes I + \rho^1 \otimes \tilde{U}^M_{L}]\left(\frac{1}{2}[\rho^0 + \rho^1 + \rho^{01}e^{-i\theta} + \rho^{10}e^{i\theta}] \otimes \sum_i a_i \rho_i\right)[\rho^0 \otimes I + \rho^1 \otimes \tilde{U}^{M^{\dagger}}_{R}]\\
   &= \frac{1}{2}\left[\rho^0 \otimes \sum_i a_i \rho_i + \rho^{10}e^{i\theta}\otimes \tilde{U}^M_{L}\sum_{i} a_i \rho_i + \rho^{01}e^{-i\theta}\otimes \left(\sum_{i} a_i \rho_i \right)\tilde{U}^{M^{\dagger}}_{R} + \rho^1 \otimes   \tilde{U}^M_{L} \left(\sum_{i} a_i\rho_i \right) \tilde{U}^{M^{\dagger}}_{R}\right].
\end{split}
\end{equation}
We use the notation $\tilde{U}^M_{L}$ and $\tilde{U}^{M^{\dagger}}_{R}$ to emphasize that we are considering a right and left multiplication of $U^M$ in $\tilde{U}^M$.

After applying the second Hadamard gate on the ancilla qubit, the density matrix changes as follows:
\begin{equation}
 \rho = \frac{1}{2}\left[\rho^+ \otimes \sum_i a_i \rho_i + \rho^{10(H)}e^{i\theta}\otimes \tilde{U}^M_{L}\sum_{i} a_i \rho_i + \rho^{01(H)}e^{-i\theta}\otimes \left(\sum_{i} a_i \rho_i \right)\tilde{U}^{M^{\dagger}}_{R} + \rho^- \otimes   \tilde{U}^M_{L} \left(\sum_{i} a_i\rho_i \right) \tilde{U}^{M^{\dagger}}_{R}\right].
\end{equation}
This can be rewritten as follows:
\begin{equation}
\begin{split}
 \rho &= \frac{\rho^0 \otimes\left(\sum_i a_i \rho_i + e^{i\theta}\tilde{U}^M_{{L}}\sum_{i} a_i \rho_i + e^{-i\theta}\left(\sum_{i} a_i \rho_i \right)\tilde{U}^{M^{\dagger}}_{{R}} + \tilde{U}^M_{{L}} \left(\sum_{i} a_i\rho_i \right) \tilde{U}^{M^{\dagger}}_{{R}}\right)}{4}\\
&+  \frac{\rho^1 \otimes\left(\sum_i a_i \rho_i - e^{i\theta}\tilde{U}^M_{{L}}\sum_{i} a_i \rho_i - e^{-i\theta}\left(\sum_{i} a_i \rho_i \right)\tilde{U}^{M^{\dagger}}_{R} + \tilde{U}^M_{{L}} \left(\sum_{i} a_i\rho_i \right) \tilde{U}^{M^{\dagger}}_{{R}}\right)}{4}\\
& +\cdots
\end{split}
\end{equation}

\begin{theorem}
{   If $\tilde{U}_L$ = $\epsilon \circ U $ where $ \epsilon(\rho) = \sum_\ell \beta_\ell E_\ell \rho E_\ell^\dagger$  with $\beta_\ell \in \mathbb{R}$, $\sum_{j} \beta_\ell = 1 $, \text{in particular }$\beta_{\ell^*}$ \text{is probability of no error } and $E_\ell \in $ Pauli Group then when $\rho=\rho_\phi$ and $s \in \{3,M\}$
\begin{equation}
\tiny{
    \begin{split}
       \tilde{U}^{s}_L (\rho_\phi) &= e^{2\pi i s \phi_k} \beta_{k_1^*}\beta_{k_2^*}\dots \beta_{k_s^*} \rho_\phi +  e^{2 \pi i s \phi_k} \beta^*_{k_1}\beta_{k_2^*}\dots \beta_{k_{s-1}^*}\sum_{k_s \neq k_s^*} \beta_{k_s} E_{k_s} \rho_\phi E_{k_s}^{\dagger} + \\
       &e^{2\pi i \phi_k (s-1) } \beta_{k_1^*}\beta_{k_2^*}\dots \beta_{k_{s-2}^*}\sum_{k_{s-1} \neq k_{s-1}^*,k_s} \beta_{k_{s-1}} \beta_{k_s} E_{k_s} U E_{k_{s-1}}  \rho_\phi E_{k_{s-1}}^{\dagger} E_{k_s}^{\dagger} + ... \\
       &e^{2\pi i \phi_k (s-j)} \beta_{k_1^*}\beta_{k_2^*}\dots \beta_{k_{s-j}^*}\sum_{k_{s-j} \neq k_{s-j}^*,k_{s-j+1}, \dots k_{s-j}} \beta_{k_{s-j}} \dots \beta_{k_{s}} E_{k_{s}} \dots U E_{k_{s-j}}  \rho_\phi E_{k_{s-j}}^{\dagger} E_{k_{s}}^{\dagger}  + \dots \\
        &e^{2\pi i \phi_k }\sum_{k_1 \neq k_{1}^*,k_2 \dots, k_s} \beta_{k_1} \dots \beta_{k_s} E_{k_s} \dots U E_{k_1}  \rho_{\phi} E_{k_1}^{\dagger} \dots E_{k_s}^{\dagger}  \\
    \end{split}
}
\end{equation}
}
\end{theorem}

\begin{proof}

{Base case: $s=1$, doesn't establish the pattern for an induction argument.}

{
\begin{equation}
\begin{split}
    \tilde{U}_{L}\rho_\phi &= (\epsilon \circ U)\rho_\phi =\epsilon \circ (U\rho_\phi)\\
    &= \epsilon \circ (e^{2\pi i \phi_k}\rho_\phi)\\
    &= e^{2\pi i \phi_k} (\epsilon \circ \rho_\phi)\\
    &= e^{2\pi i \phi_k} \sum_{k_1}\beta_{k_1}E_{k_1}\rho_\phi E_{k_1}^\dagger\\
    &= e^{2\pi i \phi_k} \left(\beta_{k_1^*}\rho_\phi + \sum_{k_1\neq k_1^*}\beta_{k_1}E_{k_1}\rho_\phi E_{k_1}^\dagger\right)\\
\end{split}
\end{equation}
}

{Base case: $s=2$, doesn't establish the pattern for an induction argument.}

{
\begin{equation}
\begin{split}
    \tilde{U}^2_L\rho_\phi &= (\epsilon \circ U)\left(e^{2\pi i \phi_k} \left(\beta_{k_1^*}\rho_\phi + \sum_{k_1\neq k_1^*}\beta_{k_1}E_{k_1}\rho_\phi E_{k_1}^\dagger\right)\right)\\
    &= \epsilon \circ \left(e^{2\pi i \phi_k} \left(\beta_{k_1^*}U\rho_\phi + \sum_{k_1\neq k_1^*}\beta_{k_1}UE_{k_1}\rho_\phi E_{k_1}^\dagger\right)\right)\\
    &= \epsilon \circ \left(e^{4\pi i \phi_k} \beta_{k_1^*}\rho_\phi + e^{2\pi i \phi_k} \sum_{k_1\neq k_1^*}\beta_{k_1}UE_{k_1}\rho_\phi E_{k_1}^\dagger\right)\\
    &= \left(e^{4\pi i \phi_k} \beta_{k_1^*}\sum_{k_2}\beta_{k_2}E_{k_2}\rho_\phi E_{k_2}^\dagger + e^{2\pi i \phi_k} \sum_{k_1\neq k_1^*,k_2}\beta_{k_1}\beta_{k_2}E_{k_2}UE_{k_1}\rho_\phi E_{k_1}^\dagger E_{k_2}^\dagger\right)\\
    &= \left(e^{4\pi i \phi_k} \beta_{k_1^*}\beta_{k_2^*}\rho_\phi + e^{4\pi i \phi_k} \sum_{k_2 \neq k_2^*}\beta_{k_1^*}\beta_{k_2}E_{k_2}\rho_\phi E_{k_2}^\dagger + e^{2\pi i \phi_k} \sum_{k_1\neq k_1^*,k_2}\beta_{k_1}\beta_{k_2}E_{k_2}UE_{k_1}\rho_\phi E_{k_1}^\dagger E_{k_2}^\dagger\right)\\
\end{split}
\end{equation}
}

{Base case: $s=3$}

{
\begin{equation}
\begin{split}
    \tilde{U}^3_L\rho_\phi &= (\epsilon \circ U)\left(e^{4\pi i \phi_k} \beta_{k_1^*}\beta_{k_2^*}\rho_\phi + e^{4\pi i \phi_k} \sum_{k_2 \neq k_2^*}\beta_{k_1^*}\beta_{k_2}E_{k_2}\rho_\phi E_{k_2}^\dagger + e^{2\pi i \phi_k} \sum_{k_1\neq k_1^*,k_2}\beta_{k_1}\beta_{k_2}E_{k_2}UE_{k_1}\rho_\phi E_{k_1}^\dagger E_{k_2}^\dagger\right)\\
    &= \epsilon \circ \left(e^{4\pi i \phi_k} \beta_{k_1^*}\beta_{k_2^*}U\rho_\phi + e^{4\pi i \phi_k} \sum_{k_2 \neq k_2^*}\beta_{k_1^*}\beta_{k_2}UE_{k_2}\rho_\phi E_{k_2}^\dagger + e^{2\pi i \phi_k} \sum_{k_1\neq k_1^*,k_2}\beta_{k_1}\beta_{k_2}UE_{k_2}UE_{k_1}\rho_\phi E_{k_1}^\dagger E_{k_2}^\dagger\right)\\
    &= \epsilon \circ \left(e^{6\pi i \phi_k} \beta_{k_1^*}\beta_{k_2^*}\rho_\phi + e^{4\pi i \phi_k} \sum_{k_2 \neq k_2^*}\beta_{k_1^*}\beta_{k_2}UE_{k_2}\rho_\phi E_{k_2}^\dagger + e^{2\pi i \phi_k} \sum_{k_1\neq k_1^*,k_2}\beta_{k_1}\beta_{k_2}UE_{k_2}UE_{k_1}\rho_\phi E_{k_1}^\dagger E_{k_2}^\dagger\right)\\
    &= e^{6\pi i \phi_k} \beta_{k_1^*}\beta_{k_2^*}\sum_{k_3 \neq k_3^*}E_{k_3}\rho_\phi E_{k_3}^\dagger + e^{4\pi i \phi_k} \sum_{k_2 \neq k_2^*,k_3}\beta_{k_1^*}\beta_{k_2}\beta_{k_3}E_{k_3}UE_{k_2}\rho_\phi E_{k_2}^\dagger E_{k_3}^\dagger\\
    &+ e^{2\pi i \phi_k} \sum_{k_1\neq k_1^*,k_2,k_3}\beta_{k_1}\beta_{k_2}\beta_{k_3}E_{k_3}UE_{k_2}UE_{k_1}\rho_\phi E_{k_1}^\dagger E_{k_2}^\dagger E_{k_3}^\dagger\\
    &= e^{6\pi i \phi_k} \beta_{k_1^*}\beta_{k_2^*}\beta_{k_3^*} \rho_\phi + e^{6\pi i \phi_k} \beta_{k_1^*}\beta_{k_2^*}\sum_{k_3 \neq k_3^*}E_{k_3}\rho_\phi E_{k_3}^\dagger + e^{4\pi i \phi_k} \sum_{k_2 \neq k_2^*,k_3}\beta_{k_1^*}\beta_{k_2}\beta_{k_3}E_{k_3}UE_{k_2}\rho_\phi E_{k_2}^\dagger E_{k_3}^\dagger\\
    &+ e^{2\pi i \phi_k} \sum_{k_1\neq k_1^*,k_2,k_3}\beta_{k_1}\beta_{k_2}\beta_{k_3}E_{k_3}UE_{k_2}UE_{k_1}\rho_\phi E_{k_1}^\dagger E_{k_2}^\dagger E_{k_3}^\dagger\\
\end{split}
\end{equation}
}
To prove the general statement we proceed by strong induction we assume it holds for $s \in\{3,M\} $, then by assumption we have that
\begin{equation}
\tiny{
    \begin{split}
       \tilde{U}^{s-1}_L (\rho_\phi) &= e^{2\pi i (s-1) \phi_k} \beta_{k_1^*}\beta_{k_2^*}\dots \beta_{k_{s-1}^*} \rho_\phi +  e^{2 \pi i (s-1) \phi_k} \beta_{k_1^*}\beta_{k_2^*}\dots \beta_{k_{s-2}^*}\sum_{k_{s-1} \neq k_{s-1}^*} \beta_{k_{s-1}} E_{k_{s-1}} \rho_\phi E_{k_{s-1}}^{\dagger} \\
       &+ e^{2\pi i \phi_k (s-2) } \beta^*_{k_1}\beta^*_{k_2}\dots \beta_{k_{s-3}^*}\sum_{k_{s-2} \neq k_{s-2}^*,k_{s-1}} \beta_{k_{s-2}} \beta_{k_{s-1}} E_{k_{s-1}} U E_{k_{s-2}}  \rho_\phi E_{k_{s-2}}^{\dagger} E_{k_{s-1}}^{\dagger} + \dots \\
       &+ e^{2\pi i \phi_k (s-1-j)} \beta_{k_1^*}\beta_{k_2^*}\dots \beta_{k_{s-j-1}^*}\sum_{k_{s-1-j} \neq k_{s-j-1}^*,k_{s-j}, \dots k_{s-1}} \beta_{k_{s-j-1}} \dots \beta_{k_{s-1}} E_{k_{s-1}} \dots U E_{k_{s-j-1}}  \rho_\phi E_{k_{s-1-j}}^{\dagger} E_{k_{s-1}}^{\dagger}  + \dots \\
        &+ e^{2\pi i \phi_k }\sum_{k_1 \neq k_{1}^*,k_2 \dots, k_{s-1}} \beta_{k_1} \dots \beta_{k_{s-1}} E_{k_{s-1}} \dots U E_{k_1}  \rho_{\phi} E_{k_1}^{\dagger} \dots E_{k_{s-1}}^{\dagger}  \\
    \end{split}
}
\end{equation}
from this we have that
\begin{equation}
\tiny{
 \begin{split}
 (\tilde{U}^{s}_L) \rho_\phi &  =  \tilde{U}_L(\tilde{U}^{s-1}_L \rho_\phi)\\
   &= (\epsilon \circ U) \left(
e^{2\pi i (s-1) \phi_k} \beta_{k_1^*}\beta_{k_2^*}\dots \beta_{k_{s-1}^*} \rho_\phi +  e^{2 \pi i (s-1) \phi_k} \beta_{k_1^*}\beta_{k_2^*}\dots \beta_{k_{s-2}^*}\sum_{k_{s-1} \neq k_{s-1}^*} \beta_{k_{s-1}} E_{k_{s-1}} \rho_\phi E_{k_{s-1}}^{\dagger}  \right.  \\
       &+ e^{2\pi i \phi_k (s-2) } \beta_{k_1^*}\beta_{k_2^*}\dots \beta_{k_{s-3}^*}\sum_{k_{s-2} \neq k_{s-2}^*,k_{s-1}} \beta_{k_{s-2}} \beta_{k_{s-1}} E_{k_{s-1}} U E_{k_{s-2}}  \rho_\phi E_{k_{s-2}}^{\dagger} E_{k_{s-1}}^{\dagger} + \dots \\
      &+ e^{2\pi i \phi_k (s-1-j)} \beta_{k_1^*}\beta_{k_2^*}\dots \beta_{k_{s-j-1}^*}\sum_{k_{s-1-j} \neq k_{s-j-1}^*,k_{s-j}, \dots k_{s-1}} \beta_{k_{s-j-1}} \dots \beta_{k_{s-1}} E_{k_{s-1}} \dots U E_{k_{s-j-1}}  \rho_\phi E_{k_{s-1-j}}^{\dagger} E_{k_{s-1}}^{\dagger}  + \dots \\
       &+  \left. e^{2\pi i \phi_k }\sum_{k_1 \neq k_{1}^*,k_2 \dots, k_{s-1}} \beta_{k_1} \dots \beta_{k_{s-1}} E_{k_{s-1}} \dots U E_{k_1}  \rho_{\phi} E_{k_1}^{\dagger} \dots E_{k_{s-1}}^{\dagger} \right) \\
      &= e^{2 \pi i s \phi_k} \beta_{k_1^*} \dots \beta_{k_{s-1}^*} \textcolor{blue}{ \left( \beta_{k_s^*}\rho_\phi + \sum_{k_{s}\neq k_{s^*}} \beta_{k_s}E_{k_s} \rho_\phi E^{\dagger}_{k_s} \right)}  \\
&+ e^{2 \pi i (s-1) \phi_k} \beta_{k_1^*}\dots \beta_{k_{s-2}^*}\sum_{k_{s-1} \neq k_{s-1}^*,\textcolor{blue}{k_s}} \beta_{k_{s-1}} \textcolor{blue}{\beta_{k_s} E_{k_s}} E_{k_{s-1}} \rho_\phi E_{k_{s-1}}^{\dagger} \textcolor{blue}{E_{k_s}^{\dagger}} + \dots \\
 &+ e^{2 \pi (s-1-j)\phi_k} \beta_{k_1^*} \dots \beta_{k_{s-1-j}} \sum_{k_{s-1-j}\neq k_{s-1-j}^*, k_{s-j},\dots k_{s-1},\textcolor{blue}{k_s} } \beta_{k_{s-1-j}} \dots \beta_{k_{s-1}} \textcolor{blue}{\beta_{k_s} E_{k_s}} U E_{k_{s-1}} \dots U E_{k_{s-1-j}} \rho_\phi E^{\dagger}_{k_{s-1-j}} \dots E^{\dagger}_{k_{s-1}} \textcolor{blue}{E_{k_s}^{\dagger}} \\
 &+ e^{2\pi i \phi_k }\sum_{k_1 \neq k_{1}^*,k_2 \dots, k_{s-1},\textcolor{blue}{k_s}} \beta_{k_1} \dots \beta_{k_{s-1}} E_{k_{s-1}} \textcolor{blue}{\beta_{k_s} E_{k_s}} \dots U E_{k_1}  \rho_{\phi} E_{k_1}^{\dagger} \dots E_{k_{s-1}}^{\dagger} \textcolor{blue}{E^{\dagger}_{k_s}}
 \end{split}
 }
\end{equation}

Hence, proved.
\end{proof}
From the above theorem we have the following expressions
\begin{equation}
\tiny{
    \begin{split}
       \tilde{U}^M _L(\rho_\phi) &= e^{2\pi i M \phi_k} \beta_{k_1^*}\beta_{k_2^*}\dots \beta_{k_M^*} \rho_\phi +  e^{2 \pi i M \phi_k} \beta_{k_1^*}\beta_{k_2^*}\dots \beta_{k_{M-1}^*}\sum_{k_M \neq k_M^*} \beta_{k_M} E_{k_M} \rho E_{k_M}^{\dagger} + \\
       &e^{2\pi i \phi_k (M-1) } \beta^*_{k_1}\beta^*_{k_2}\dots \beta_{k_{M-2}^*}\sum_{k_{M-1} \neq k_{M-1}^*,k_M} \beta_{k_{M-1}} \beta_{k_M} E_{k_M} U E_{k_{M-1}}  \rho E_{k_{M-1}}^{\dagger} E_{k_M}^{\dagger} + \\
       &e^{2\pi i \phi_k (M-n+1)} \beta^*_{k_1}\beta^*_{k_2}\dots \beta_{k_{M-n}^*}\sum_{k_{M-n+1} \neq k_{M-n+1}^*,k_{M-n+2}, \dots k_M} \beta_{k_{M-1}} \beta_{k_M} E_{k_{M-n+1}} \dots U E_{k_{M-n+1}}  \rho E_{k_{M-1}}^{\dagger} E_{k_{M-n+1}}^{\dagger} \dots E_{k_M}^{\dagger} + \\
        &e^{2\pi i \phi_k }\sum_{k_1 \neq k_{1}^*,k_2 \dots, k_M} \beta_{k_1} \dots \beta_{k_M} E_{k_M} \dots U E_{k_1}  \rho E_{k_1}^{\dagger} \dots E_{k_M}^{\dagger}  \\
    \end{split}
}
\end{equation}

\begin{equation}
\tiny{
    \begin{split}
       (\rho_\phi)\tilde{U}^{M^\dagger}_R &= e^{-2\pi i M \phi_k} \beta_{k_1^*}\beta_{k_2^*}\dots \beta_{k_M^*} \rho_\phi +  e^{-2 \pi i M \phi_k} \beta_{k_1^*}\beta_{k_2^*}\dots \beta_{k_{M-1}^*}\sum_{k_M \neq k_M^*} \beta_{k_M} E_{k_M} \rho E_{k_M}^{\dagger} + \\
       &e^{-2\pi i \phi_k (M-1) } \beta_{k_1^*}\beta_{k_2^*}\dots \beta_{k_{M-2^*}}\sum_{k_{M-1} \neq k_{M-1}^*,k_M} \beta_{k_{M-1}} \beta_{k_M} E_{k_{M}} E_{k_{M-1}}  \rho   E_{k_M}^{\dagger} U^{\dagger} E_{k_{M-1}}^{\dagger}  + 
       \\
       &e^{-2\pi i \phi_k (M-n+1)} \beta_{k_1^*}\beta_{k_2^*}\dots \beta_{k_{M-n}^*}\sum_{k_{M-n+1} \neq k_{M-n+1}^*,k_{M-n+2}, \dots k_M} \beta_{k_{M-n_1}} \dots \beta_{k_M} E_{k_M} \dots   E_{k_{M-n+1}} \rho E_{k_{M-n+1}}^{\dagger} \dots U^{\dagger} E_{k_{M}}^{\dagger} + \\
        &e^{-2\pi i \phi_k }\sum_{k_1 \neq k_{1}^*,k_2 \dots, k_M} \beta_{k_1} \dots \beta_{k_M}  E_{k_1} \dots E_{k_M} \rho  E_{k_1}^{\dagger} \dots U^{\dagger} E_{k_M}^{\dagger} + \\
    \end{split}
}
\end{equation}

\begin{equation}
\tiny{
    \begin{split}
      \tilde{U}^{M}_L (\rho_\phi)\tilde{U}^{M^\dagger}_R &=  \beta_{k_1^*}\beta_{k_2^*}\dots \beta_{k_M^*} \rho_\phi +   \beta_{k_1^*}\beta_{k_2^*}\dots \beta_{k_{M-1}^*}\sum_{k_M \neq k_M^*} \beta_{k_M} E_{k_M} \rho E_{k_M}^{\dagger} + \\
       & \beta_{k_1^*}\beta_{k_2^*}\dots \beta_{k_{M-2^*}}\sum_{k_{M-1} \neq k_{M-1}^*,k_M} \beta_{k_{M-1}} \beta_{k_M} E_{k_{M}} UE_{k_{M-1}}  \rho   E_{k_M}^{\dagger} U^{\dagger} E_{k_{M-1}}^{\dagger}  + 
       \\
       & \beta_{k_1^*}\beta_{k_2^*}\dots \beta_{k_{M-n}^*}\sum_{k_{M-n+1} \neq k_{M-n+1}^*,k_{M-n+2}, \dots k_M} \beta_{k_{M-n_1}} \dots \beta_{k_M} E_{k_M} \dots   UE_{k_{M-n+1}} \rho E_{k_{M-n+1}}^{\dagger} \dots U^{\dagger} E_{k_{M}}^{\dagger} + \\
        &\sum_{k_1 \neq k_{1}^*,k_2 \dots, k_M} \beta_{k_1} \dots \beta_{k_M}  E_{k_1} \dots E_{k_M} \rho  E_{k_1}^{\dagger} \dots U^{\dagger} E_{k_M}^{\dagger} \\
    \end{split}
}
\end{equation}

A more straightforward calculation shows that when $\rho$ is not an eigenstate,
\begin{align}
    \tilde{U}^M_{L}\rho &=  \sum_{k_1,k_2,\ldots,k_M} \gamma_{k_1} \gamma_{k_2} \cdots \gamma_{k_M} E_{k_M} \ldots U E_{k_2} U E_{k_1} U \rho E_{k_1}^\dagger E_{k_2}^\dagger \ldots E_{k_M}^\dagger,
\end{align}
\begin{align}
\rho \tilde{U}^{M^{\dagger}}_{R} &=  \sum_{k_1,k_2,\ldots,k_M} \gamma_{k_1} \gamma_{k_2} \cdots \gamma_{k_M} E_{k_M} \ldots E_{k_2} E_{k_1} \rho U^\dagger E_{k_1}^\dagger U^\dagger E_{k_2}^\dagger \ldots U^\dagger E_{k_M}^\dagger,
\end{align}
\begin{align}
\tilde{U}^M_{L} \rho \tilde{U}^{M^{\dagger}}_{R} &= \sum_{k_1,k_2,\ldots,k_M} \gamma_{k_1} \gamma_{k_2} \cdots \gamma_{k_M} E_{k_M}  \ldots U E_{k_2} U E_{k_1} U \rho U^\dagger E_{k_1}^\dagger U^\dagger E_{k_2}^\dagger \ldots U^\dagger E_{k_M}^\dagger.
\end{align}

The probability of measuring $0$ on the ancilla is then

\begin{equation}
\label{eq:prob_0_ancilla_noise}
\tiny{
    \begin{split}
        p(0|x) &= \frac{1}{4} Tr \left( \sum_{j}a_j \rho_j + e^{i\theta} \tilde{U}^M \sum_{j}a_j \rho_j + e^{-i\theta} \sum_{j}a_j \rho_j \tilde{U}^{M^\dagger} + \tilde{U}^M  \sum_{j}a_j \rho_j \tilde{U}^{M^\dagger}  \right) \\
        &= \frac{1}{4} \left( Tr(a_{\phi} \rho_{\phi}) +  \sum_{j \neq \phi} a_j Tr(\rho_j) + a_{\phi} Tr(e^{i\theta} \tilde{U}^M \rho_\phi)  + \sum_{j \neq \phi} a_{j}  Tr (e^{i\theta} \tilde{U}^M \rho_j)  + a_{\phi} Tr(e^{-i\theta}  \rho_\phi \tilde{U}^{M^\dagger}) \right. \\
        & \left. +\sum_{j \neq \phi} a_{j}  Tr (e^{-i\theta} \rho_j\tilde{U}^{M^\dagger}) + a_\phi Tr(\tilde{U}^M \rho_\phi \tilde{U}^{M^\dagger}) + \sum_{j \neq \phi} a_j Tr(\tilde{U}^M \rho_j \tilde{U}^{M^\dagger})
        \right) \\
        &= \frac{1}{4} \left(a_\phi + \sum_{j\neq \phi} a_j + a_\phi + \sum_{j\neq \phi} a_j + 2Re\left[  a_{\phi} + Tr(e^{i\theta \tilde{U}} \rho_\phi) + \sum_{j\neq \phi}a_j r(e^{i\theta \tilde{U}} \rho_i) \right]
        \right) \\
        & = \frac{1}{2} \left( 1 +a_{\phi} \beta_{k_1^*} \dots \beta_{k_{M-1}^*} ( \beta_{k_{M}^*}+ \sum_{k_M \neq k_{M}^* } \beta_{k_M} )\cos(2\pi M \phi_k + \theta ) \right.\\
        &+ a_{\phi} \beta_{k_1^*} \dots \beta_{k_{M-2}^*} \sum_{k_{M-1}\neq k_{M-1}^*,k_M } \beta_{k_{M-1}}  \beta_{k_M} \delta_{k_{M-1},k_M} \cos(2\pi M \phi_k + \alpha_{k_{M-1},k_M} +\theta ) + \dots \\
        & \left. + a_\phi \sum_{k_1\neq k_1^*,k_2, \dots ,k_M } \beta_{k_1} \dots \beta_{k_M} \delta_{k_1, \dots k_M} \cos( 2 \pi \phi_k + \alpha_{k_1, \dots k_M} + \theta ) + \sum_{j\neq \phi}a_j \sum_{k_1, k_2 ,\dots, k_M} \gamma_{k_1} \dots \gamma_{k_m} \sigma_{k_1, \dots k_M} \cos(2 M \phi_k + \eta_{k_1, \dots k_M}+ \theta) \right) \\
    \end{split}
}
\end{equation}
In Equation \eqref{eq:prob_0_ancilla_noise}, $\alpha_i, \eta_i$ represent the additional rotation due to noise, and $\delta_i,\gamma_i,\sigma_i$ represent probabilities ; here $i$ denotes various subscripts used in equation \eqref{eq:prob_0_ancilla_noise}. The number of "noise" terms in Equation \eqref{eq:prob_0_ancilla_noise} as well as the damping of the term corresponding to the required increase as the depth $M$ of the circuit increases. For simplicity, the damping factor as well as the "additive" noise terms in Equation \eqref{eq:prob_0_ancilla_noise} can be consolidated into a single term, $\eta$ representing total noise. This in essence gives us the probability of outcome of the real Hadamard test in Equation \eqref{eq:BAN_prob_cos} which corresponds to the BAN model. Similar calculations can be used to compute the outcome probability of the imaginary Hadamard test as given in Equation \eqref{eq:BAN_prob_sin}.

\bibliographystyle{quantum}

\onecolumn
\appendix

\end{document}